\def\j2{\mathbf{J}^2}
\def\a{\alpha}
\def\b{\beta}
\def\g{\gamma}
\def\e{\epsilon}
\def\ve{\varepsilon}
\def\f{\mathcal{F}}
\def\m{\mu}
\def\n{\nu}
\def\d{\delta}
\def\L{\Lambda}
\def\p{\partial}
\def\q{\mathcal{Q}}
\def\td{\widetilde{D}}
\def\hd{\widehat{D}}
\def\fn{\left. \Phi_n \right|_t }
\def\dfn{\left. \dot \Phi_n \right|_t }
\def\rn{Reissner-Nordstr\"om }
\newtheorem{thm}{Theorem}%[section]
\newtheorem{cor}[thm]{Corollary}
\begin{document}
\title{Black hole nonmodal linear stability: odd perturbations of  Reissner-Nordstr\"om}

\author{Juli\'an M.  Fern\'andez T\'{\i}o and Gustavo Dotti}
\affiliation {Facultad de Matem\'atica, Astronom\'{\i}a y
F\'{\i}sica (FaMAF), Universidad Nacional de C\'ordoba and\\
Instituto de F\'{\i}sica Enrique Gaviola, CONICET.\\ Ciudad
Universitaria, (5000) C\'ordoba,\ Argentina}

\begin{abstract}
Following a program on black hole nonmodal linear stability initiated in Phys.\ Rev.\ Lett.\  {\bf 112} (2014) 191101, 
we study odd linear perturbations of the Einstein-Maxwell equations around a Reissner-Nordstr\"om (A)dS 
black hole. We show that all the gauge invariant information in the metric and Maxwell field perturbations 
is encoded in the  spacetime scalars $\mathcal{F} =\d (F^*_{\alpha \beta} F^{\alpha \beta})$
 and $\mathcal{Q} =\delta (\tfrac{1}{48} C^*_{\alpha \beta \gamma \delta} C^{\alpha \beta \gamma \delta})$,  
where $C_{\alpha \beta \gamma \delta}$
 is the Weyl tensor, $F_{\alpha \beta}$ the Maxwell field, a star denotes Hodge dual and $\delta$ means 
first order variation, and that   the linearized Einstein-Maxwell equations are equivalent to a coupled 
system of wave equations  for  $\mathcal{F}$ and $\mathcal{Q}$. 
 For nonnegative cosmological constant 
we prove that $\mathcal{F}$ and $\mathcal{Q}$ are pointwise bounded  on the outer static region. 
The fields are shown to diverge as the Cauchy horizon is approached from the inner dynamical region, providing evidence 
supporting strong cosmic censorship. 
In the asymptotically AdS case
the dynamics depends on the boundary condition at the conformal timelike boundary and there are 
instabilities if Robin boundary conditions are chosen.
 \end{abstract}

\pacs{04.50.+h,04.20.-q,04.70.-s, 04.30.-w}

\maketitle

\tableofcontents

\section{Introduction}

General Relativity coupled to Maxwell fields admits static charged black hole solutions    in spacetime dimensions $d=4$ and higher.  
The spacetimes  
 are  warped products ${\cal M} = {\cal N} \times_{r^2} \sigma^n$ 
of a two dimensional Lorentzian ``orbit manifold" ${\cal N}$ with  line element 
$\tilde g_{ab}(y) dy^a dy^b$
and  an $n=d-2$ dimensional Riemannian ``horizon manifold''  $\sigma^n$ with metric  $\hat g_{AB}(x)\,  dx^A dx^B$ (see, e.g., \cite{Kodama:2003kk}):
\begin{equation} \label{warped} 
g_{\a \b} dz^{\a} dz^{\b} = \tilde g_{ab}(y) \, dy^a dy^b + r^2(y) \hat g_{AB}(x)\,
  dx^A dx^B.
\end{equation}
In  four dimensions, the solution with $S^2$ horizon   is the Reissner-Nordstr\"om  black hole. If we use 
  the standard angular coordinates $\hat g_{ij}(x)\,  dx^i dx^j = d \theta^2 + \sin ^2 \theta \; d \phi^2$ and 
 static coordinates $(t,r)$ for the orbit manifold,  the Reissner-Nordstr\"om  metric is given by
\begin{equation} \label{rn}
  g_{\a \b} dz^{\a} dz^{\b}  = -f dt^2 + \frac{dr^2}{f} + r^2 (d \theta^2 + \sin ^2 \theta \; d \phi^2 ), 
  \end{equation}
where the norm $f$ 
   of the Killing vector $k^a = \p / \p t$ in (\ref{rn}) is 
\begin{equation} \label{f}
f = 1 - \frac{2M}{r} + \frac{Q^2}{r^2} - \frac{\L}{3} r^2.
\end{equation}
In (\ref{f}), $\Lambda$ is the cosmological constant, and $M$ and $Q$ are constants of integration 
that correspond to  mass and  charge respectively. The metric (\ref{rn}) together with the Maxwell 
field
\begin{equation} \label{max}
F = \frac{Q}{r^2} \; dt \wedge dr,
\end{equation}  
solves the  Einstein-Maxwell field equations
\begin{align} \label{efe}
&G_{\a \b} + \L g_{\a \b} = 8 \pi T_{\a \b}, \\  \label{max1}
& \nabla_{[\a} F_{\b \gamma]} =0,\\ \label{max2}
&\nabla^{\b} F_{\a \b} = 0,
\end{align}
where
\begin{equation} \label{efe2}
T_{\a \b} = \tfrac{1}{4\pi} \left( F_{\a \g} F_{\b}{}^{\g} -\tfrac{1}{4} g_{\a \b} F_{\g \d}F^{\g\d} \right).
\end{equation}
Note that, since $T_{\a \b}$ is traceless, (\ref{efe}) is equivalent to 
\begin{equation} \label{efe1}
R_{\a \b} - \L g_{\a \b} = 8 \pi T_{\a \b}
\end{equation}
\noindent
We assume $Q \neq 0$ and focus in the black hole cases, which are those 
 for which  there is an outer  static ($f>0$) region, that is, 
either $\Lambda \leq 0$ and   $0<r_h<r<\infty$, or 
 $\Lambda > 0$ and  
$0<r_h<r<r_c$. Here the  event and cosmological  horizons $r=r_h$ and $r=r_c$ 
are simple zeros of the quartic polynomial $r^2 f$ if the black hole is non extremal.
 The range of values of $M, Q$ and $\Lambda$ 
giving black holes  can be found 
in Appendix A of  \cite{Kodama:2003kk}. \\

We are interested in proving the non-modal linear stability of the outer static region of 
the solution (\ref{rn})-(\ref{max}) of the field equations (\ref{efe})-(\ref{efe2}). By this we mean  \cite{Dotti:2013uxa}
\cite{Dotti:2016cqy}, 
showing that:
\begin{enumerate}[i)]
\item there are  gauge invariant (both in the Maxwell and linear gravity sense) 
scalar fields $\chi: {\cal M} \to \mathbb{R}$ that contain the same information as  the perturbation 
$\mathcal{F}_{\a \b} = \d F_{\a \b}$ of the electromagnetic 
  field and the gauge class $[h_{\a \b}]$ of the metric perturbation $h_{\a \b} =\delta g_{\a \b}$, 
 and measure the distortion of the geometry and the Maxwell field. 
By ``contain the same information'' we mean that  
 $h_{\a \b}$  in a given gauge  and   $\mathcal{F}_{\a \b}$  
can be obtained by applying some injective  linear functional on the fields $\chi$. 
\item the fields $\chi$ are pointwise bounded on the outer static region by constants that depends on 
the initial data of the perturbation on a Cauchy surface.
\end{enumerate}

The perturbed metric and Maxwell fields  can be expanded in  series involving 
 rank 0,1, and 2 eigentensor fields of the horizon manifold Laplace-Beltrami (LB) operator,  with ``coefficients'' that are  
tensor fields in the orbit space $\mathcal{N}$;  this is the mode expansion of $h_{\a \b}$ and 
$\mathcal{F}_{\a \b}$ \cite{Kodama:2003kk}.
 The linearized Einstein-Maxwell equations (LEME) do not mix modes. A master scalar field  
$\mathcal{N} \to \mathbb{R}$ can be extracted for each mode such that the LEME 
reduce to an infinite set of {\em scalar wave equations on} $\mathcal{N}$, one for each master scalar field. 
This was proved in four dimensional General Relativity 
in the seminal black hole stability papers \cite{Regge:1957td} \cite{Zerilli:1970se} \cite{Zerilli:1974ai}  and in higher dimensions more recently by Kodama and 
Ishibashi  (see, eg.g. \cite{Kodama:2003kk} and \cite{Ishibashi:2011ws}). 
Prior notions of linear stability are based on   the boundedness of the master fields  on the orbit manifold $\mathcal{N}$, we call
this {\em modal (linear) stability}. 
In the case of  four dimensional asymptotically flat  charged black holes  the modal linear stability 
  was  proved by Zerilli and  Moncrief  in the series 
of articles \cite{Zerilli:1974ai, Moncrief:1974gw, Moncrief:1974ng, Moncrief:1975sb} (see also \cite{wald}) \\

The limitations of the modal linear stability notion are explained   in \cite{Dotti:2013uxa} and 
\cite{Dotti:2016cqy}, where a non modal stability  concept based on i) and ii) above was proved to hold 
for the Schwarzschild and Schwarzschild de Sitter black holes. In these papers the fields $\chi$ in i) 
 are 
 gauge invariant combinations of perturbations of scalars made out of contractions of
the Weyl tensor, its dual, and its first covariant derivative. \\

For Einstein-Maxwell black holes the extra degrees of 
freedom coming from the Maxwell field have to be accounted for.  Perturbations 
naturally split into two decoupled types: odd and even (Section \ref{rwz}). In this paper  we  prove  
the non modal linear stability of the Reissner-Nordstr\"om  black hole under odd perturbations. The fields  $\chi$ 
that fulfill the requirements i) and ii) above 
 are the 
 first order perturbation of the scalars obtained  by contracting the Maxwell and Weyl tensors with their 
Hodge duals: 
$\mathcal{F} =\d (F^*_{\alpha \beta} F^{\alpha \beta})$
 and $\q =\delta (\tfrac{1}{48} C^*_{\alpha \beta \gamma \delta} C^{\alpha \beta \gamma \delta})$. 
These fields are shown to satisfy a coupled system of wave equations in the Reissner-Nordstr\"om  background, and this 
fact is used to prove their pointwise boundedness on the outer static region. 
We defer to future work the treatment of even perturbations.

\section{Linearized Einstein-Maxwell equations} \label{rwz}

Let   $(g({\ve})_{\a \b}, F(\ve)_{\a \b})$ be a  one-parameter family 
of solutions of the Einstein-Maxwell equations (\ref{efe2})-(\ref{efe1}),  with $g({\ve=0})_{\a \b}$ and 
$F(\ve=0)_{\a \b}$ the 
 Reissner-Nordstr\"om  fields  (\ref{rn})-(\ref{max}). Note that  all fields in this paper are assumed to be  jointly  smooth
in the spacetime coordinates and (in the case of one-parameter families) the perturbation parameter. The  {\em perturbation fields}
\begin{equation} \label{lf}
h_{\a \b} \equiv \frac{d}{d \ve}\bigg|_{\ve=0}g({\ve})_{\a \b}, \;\;\; \f_{\a \b} \equiv \frac{d}{d \ve}
\bigg|_{\ve=0}F({\ve})_{\a \b}
\end{equation}
satisfy the linearized Einstein-Maxwell equations (LEME):
\begin{align} \label{1}
&\frac{d}{d \ve}\bigg|_{\ve=0} G_{\a \b}(g(\ve)) +  \L h_{\a \b} = 2 \frac{d}{d \ve}\bigg|_{\ve=0}  \left[ 
F({\ve})_{\a \g} F({\ve})_{\b \m} g(\ve)^{\g \m} - \tfrac{1}{4} g({\ve})_{\a \b} \left( F({\ve})_{\m \n} F({\ve})_{\m' \n'} 
g(\ve)^{\m \m'}g(\ve)^{\n \n'} \right) \right],\\ \label{2}
&\frac{d}{d \ve}\bigg|_{\ve=0} \p_{[\a} F({\ve})_{\b \g]} =0,\\ \label{3}
&\frac{d}{d \ve}\bigg|_{\ve=0} \nabla_{\b} F({\ve})^{\a \b} =0.
\end{align}
As in equation (\ref{warped}), we adopt  the notation in \cite{Chaverra:2012bh} and 
  use lower case indexes $a, b, c, d, e$ for tensors on the orbit manifold ${\cal N}$, 
upper case indexes   $A, B, C, D,...$ 
for tensors on $S^2$, and Greek indexes  for space-time tensors,  and follow the additional  convention 
 in \cite{Dotti:2016cqy} that 
\begin{equation} \label{index}
\a=(a,A), \b=(b,B), \g=(c,C), \d=(d,D),... \text{ etc. }
\end{equation}
Tensor fields {\em   introduced}
 with a lower $S^2$ index (say $Z_A$) and {\em then shown with an upper $S^2$ index} are assumed 
to have been acted  on {\em with the unit $S^2$ metric inverse $\hat g^{AB}$},
 (i.e., in our example, $Z^A \equiv \hat g^{AB} Z_B$), 
and similarly with upper $S^2$ indexes moving down.  This has to be kept in mind to avoid wrong $r^{\pm2}$ factors 
in the equations. 
$\td_a, \tilde \e_{ab}$ and $\tilde g^{a b}$ are the covariant derivative, volume form (any chosen orientation) and metric inverse 
for the $\mathcal{N}$ orbit space; $\hd_A$ and  $\hat \e_{AB}$ are the covariant derivative and volume form $\sin (\theta) d\theta \wedge d\phi$ 
on the unit sphere. As an example, the Laplacian on 
scalar fields can be written in terms of the differential operators $\td_a$ and $\hd_A$ as 
\begin{equation} \label{lap}
\nabla_{\a} \nabla^{\a} \Phi = \td_a \td^a \Phi + \frac{2}{r} (\td^b r) (\td_b \Phi) + \frac{1}{r^2} \hd_A \hd^A \Phi.
\end{equation}
The linearized field equations,  (\ref{lf}) and (\ref{2}) imply that locally there exists a vector potential $A_{\a}$ such that 
\begin{equation}
\mathcal{F}_{\a \b} = \p_{\a} A_{\b} - \p_{\b} A_{\a}, 
\end{equation}
The linear fields entering the LEME are  $h_{\a \b}$ and $A_{\a}$.
 Under the index 
convention (\ref{index}) the covector field $A_{\a}$ is written
\begin{equation} \label{vd}
A_{\a}=(A_a, A_A). 
\end{equation}
From the $S^2$ viewpoint $A_{\a}$  contains two scalar fields $A_a^+ \equiv A_a$ and a covector field $A_A$. 
Using Proposition 2.1 in \cite{Ishibashi:2004wx} and the fact that the first Betti number of $S^2$ is 
zero (which implies that divergence free $S^2$ covectors are of the form $\hat \e_{A}{}^B \hd_B P$, with 
$P$ 
an $S^2$ scalar field), 
we can write  $A_A = \hd_A A^+ + \hat \e_{A}{}^C \hd_C A^-$, thus
\begin{equation} \label{Adecomp}
A_{\a}=(A^+_a, \hd_A A^+ + \hat \e_{A}{}^C \hd_C A^-).
\end{equation}
The scalar fields $A^{\pm}$ are unique if they are required to belong to $L^2(S^2)_{>0}$ \cite{Dotti:2016cqy}, 
where $L^2(S^2)_{>\ell_o}$
is the  space 
of square integrable functions on $S^2$ orthogonal to the $\ell=0, 1,...,\ell_o$ eigenspaces of the LB operator, and 
 $\ell$ labels the LB scalar field eigenvalue $-\ell (\ell+1)$. The plus (even) and minus (odd) signs on tensor fields refer to the way 
they transform when pull backed by the 
antipodal map $P$ on $S^2$ \cite{Dotti:2013uxa}. \\
A symmetric tensor field
 $S_{\a \b} = S_{\b \a}$,
\begin{equation} \label{std1}
S_{\a \b} = \left( \begin{array}{cc} S_{ab} & S_{a B} \\ S_{A b} & S_{AB} \end{array} \right),
\end{equation}   
 such as the perturbations of the metric,  the Einstein and  the energy momentum tensor fields, 
contains  three $S^2$ scalar fields $S_{ab}^+ \equiv S_{ab}$, 
two $S^2$ covector fields $S_{a B}$  and a  
symmetric tensor field $S_{AB}$. The $S^2$ covectors can be decomposed as in (\ref{Adecomp})
\begin{equation} \label{std2}
S_{a B} = \hd_B S_a^+ + \hat \e_{B}{}^C \hd_C S_a^-,
\end{equation}
where again $S_a^{\pm}$ are unique if their components are in $L^2(S^2)_{>0}$. 
From  Proposition 2.2 in  \cite{Ishibashi:2004wx} and the fact that there are  no 
  transverse traceless symmetric rank two  tensor fields on $S^2$, follows that  
\begin{equation} \label{std3}
S_{AB} = \hd_{(A} ( \e_{B) C} \hd^C S^-) + \left( \hd_A \hd_B - \tfrac{1}{2} \hat g_{AB} \hd^C \hd_C \right) S^+ + \tfrac{1}{2} \; S_T^+ \; \hat g_{AB} , \;\; (S_T^+ = S_C{}^C).
\end{equation}
The fields $S^{\pm}$ are unique if required to belong to $L^2(S^2)_{>1}$ \cite{Dotti:2016cqy}. 
In this way, the symmetric field $S_{\a \b}$ 
is replaced by two sets of fields, even (+)  and odd (-):
\begin{equation} \label{std4}
S_{\a \b} \sim \{ S_{ab}^+=S_{ab}, \; S_a^+, \; S^+, \; S_T^+ \} \cup \{ S_a^-, \; S^- \}.
\end{equation}
If we decompose the linearized symmetric tensor fields $h_{\a \b}$, $d G_{\a \b} / d\ve |_0 \equiv \mathcal{G}_{\a \b}$  
and $d T_{\a \b}/ d\ve |_0 \equiv \mathcal{T}_{\a \b}$ as in 
(\ref{std1})-(\ref{std4}), we get  the following sets  of even and odd fields: 
\begin{align} \label{hdecomp}
{h}_{\a \b} &\sim \{  h_{ab}^+, \; h_a^+, \; h^+, \; {h}_T^+ \} \cup \{ h_a^-, \; h^- \}\\
\label{Gdecomp}
\mathcal{G}_{\a \b} &\sim \{  G_{ab}^+, \; G_a^+, \; G^+, \; G_T^+ \} \cup \{ G_a^-, \; G^- \}\\
\mathcal{T}_{\a \b} &\sim \{  T_{ab}^+, \; T_a^+, \; T^+, \; T_T^+ \} \cup \{ T_a^-, \; T^- \}. \label{Tdecomp}
\end{align}
Group theoretical arguments   (refer to  Section 2 of \cite{Ishibashi:2004wx})  indicate that 
the LEME  involving the even fields 
 in (\ref{Adecomp}) (\ref{hdecomp})-(\ref{Tdecomp})  decouple from those 
involving  the 
odd  fields, so we can switch off one sector and study purely odd or even perturbations. 
Odd perturbations are the subject of this paper. \\

 We will find it useful to introduce 
the square angular momentum operator
\begin{equation} \label{j2}
\j2 \equiv (\pounds_{J_{(1)}})^2 + (\pounds_{J_{(2)}})^2 + (\pounds_{J_{(3)}})^2, 
\end{equation}
where $J_{(1)}, J_{(2)}$ and $J_{(3)}$ are $S^2$ (and thus spacetime) Killing vector fields corresponding to rotations 
around orthogonal axis in $\mathbb{R}^3 \supset S^2$, with maximum orbit orbit lengths set to $2 \pi$ 
(e.g., $J_{(3)} = \p / \p_{\phi}$). On $S^2$ scalar fields the operator $\mathbf{J}^2$ agrees with the $S^2$ LB 
 operator 
$\hd^A \hd_A$, but these two operators 
act differently on higher rank tensors. A key property of $\mathbf{J}^2$ is that
  it commutes 
with $\nabla_{\a}$, $\td_a$ and $\hd_A$, this follows from 
 $[\nabla_a, \pounds_{J_k}] = 0 = [\hd_A, \pounds_{J_k}] = [\td_a, \pounds_{J_k}]$.
The modal decomposition consists in expanding the $S^2$ scalars in  (\ref{Adecomp}) (\ref{hdecomp})-(\ref{Tdecomp}) 
in a real basis of spherical harmonics of $S^2$, which are 
eigenfields of $\mathbf{J}^2$  with eigenvalues $-\ell(\ell+1)$, the eigenspaces being of of dimension $2\ell +1$. 
The differential operators that give a symmetric tensor $S_{\a \b}$ or a covector $A_{\a}$ in terms of these $S^2$ scalars 
commute with $\mathbf{J}^2$. Thus,  if the $S^2$ scalar 
fields in (\ref{hdecomp}) and (\ref{Adecomp}) lie  on the $\ell$ eigenspace, 
then $h_{\a \b}, \mathcal{G}_{\a \b}, \mathcal{F}_{\a \b}$ will all be eigentensors of $\mathbf{J}^2$ with eigenvalue 
$-\ell(\ell+1)$, i.e., 
different modes stay unmixed. The distinction between even and odd modes can now be stated in a precise way: 
if $X^{\pm}$ is  a  covector  (\ref{Adecomp}) or symmetric field $S_{\a \b}$ (\ref{std1})-(\ref{std4}) of 
a given parity,  made out of  scalars of 
harmonic numbers ($\ell,m)$, then 
$\j2 X^{\pm} = -\ell(\ell+1) X^{\pm}$ and  $P_* X^{\pm} = \pm (-1)^{\ell} X^{\pm}$. \\

We will assume that  $A^{\pm}, S_a^{\pm}$ to $L^2(S^2)_{>0}$ and $S^{\pm}$ to $L^2(S^2)_{>1}$, 
since  then the  linear operators $(A^+, A^-) \to A_{\a}$ in (\ref{Adecomp}), 
and $\{ S_{ab}^+, \; S_a^+, \; S^+, s^+ \} \cup \{ S_a^-, \; S^- \} \to S_{\a \b}$ 
in (\ref{std3}) are injective \cite{Dotti:2016cqy}.
Consequently, the odd sector LEME (\ref{1})  are equivalent to
\begin{align} \label{Ea}
&G^{-}_a + \Lambda h^{-}_a = 8\pi T_a^{-},\\ \label{E}
&G^{-} + \Lambda h^{-} = 8\pi T^{-}.
\end{align}

\subsection{Odd sector perturbations} \label{osp}

Odd perturbations are those for which the plus fields in (\ref{Adecomp})  and (\ref{hdecomp}) are  zero, that is 
\begin{equation} \label{pert1} \renewcommand*{\arraystretch}{1.3}
h_{\a \b} = \left( \begin{array}{cc} 0 & \hat \e_{B}{}^C \hd_C h_a^- \\ \hat \e_{A}{}^C \hd_C h_b^- & 
\hd_{(A} ( \e_{B) C} \hd^C h^-)  \end{array} \right), \;\;\; \;\;\;
\f_{\a \b} = \left( \begin{array}{cc} 0 &  \td_a  (\hat \e_{B}{}^C \hd_C A^-) \\ -\td_b ( \hat \e_{A}{}^C   \hd_C A^-)& 
- \e_{AB} \hd^C\hd_C A^- \end{array} \right),
\end{equation}  
with $A^- , h_a^- \in L^2(S^2)_{>0}$ and  $h^- \in L^2(S^2)_{>1}$, which are conditions that guarantee their uniqueness, 
as explained at the end of the previous Section.
$U(1)$ gauge transformations of the Maxwell field are of the form 
%\begin{equation} \label{u1}
$A_{\a} \to A_{\a } + \p_{\a} B$ 
%\end{equation}
and therefore affect only the even piece of the vector potential (\ref{Adecomp}) leaving $A^-$ invariant. \\
Under a coordinate gauge transformation (infinitesimal diffeomorphism)  along the odd vector field $\zeta^{\a}=(0, \hat \e^{AB} \hd_B \xi)$, 
$\xi \in L^2(S^2)_{>0}$, $h_{\a \b}$ and $\f_{\a \b}$ transform into the  physically equivalent  fields:
\begin{equation} \label{prime}
h'_{\a \b} = h_{\a \b} + \pounds_{\zeta} g_{\a \b}, \;\; \f'_{\a \b}=\f_{\a \b} + \pounds_{\zeta} F_{\a \b}=\f_{\a \b}.
\end{equation}
We call $\mathcal{L}_-$ the set of odd solutions $(h_{\a \b},\f_{\a \b})$ of the LEME (\ref{1})-(\ref{3}) mod 
the equivalence relation $h_{\a \b} \sim h'_{\a \b}$ above, that is, if $[h_{\a \b}]$ denotes equivalence class 
under  the first  transformation (\ref{prime}), then 
\begin{equation} \label{L-}
\mathcal{L}_- = \{ ([h_{\a \b}], \f_{\a \b}) \; | \; (h_{\a \b}, \f_{\a \b}) \; \text{ is an odd  solution of  (\ref{1})-(\ref{3}) } \}.
\end{equation}
The transformation  (\ref{prime}) 
 is equivalent to 
\begin{equation} \label{gauge}
h_a^- \to h_a^- +r^2 \td_a \xi, \;\;\; h^- \to h^-+ r^2 \xi_{>1}, \;\;\; A^- \to A^-, 
\end{equation}
and implies  that the field $A \equiv A^-$ is gauge invariant. If we project  
  $h_a^-  = (h_a^-)^{(\ell=1)}+ (h_a^-)^{(>1)}$ 
 onto its   $L^2(S^2)_{(\ell=1)}$ and $L^2(S^2)_{>1}$ pieces, and similarly for 
the other fields, and keep in mind that $h^- = (h^-)^{(>1)}$,  we find that:  
\begin{enumerate}[(i)]
\item  The  $\mathcal{N}$ 1-form $h_a^{>1}  \in L^2(S^2)_{>1}$ defined by 
\begin{equation} \label{gauge1}
h_a^{>1} \equiv (h_a^-)^{>1} -r^2 \td_a (r^{-2} h^-)
\end{equation}
 is gauge invariant. 
\item  There exists a gauge  for $(h_{\a \b})^{>1}$ such that  $h^-=0$. 		In view of (\ref{gauge1}), 
in this gauge
\begin{equation} \label{pertrw} \renewcommand*{\arraystretch}{1.3}
(h_{\a \b})^{>1} = \left( \begin{array}{cc} 0 & \hat \e_{B}{}^C \hd_C h_a^{>1} \\ \hat \e_{A}{}^C \hd_C h_b^{>1} & 
0  \end{array} \right)
\end{equation}  
 This is the well known  Regge-Wheeler (RW) gauge for $(h_{\a \b})^{>1}$, and is unique,
in the sense that, according to  (\ref{gauge}), 
applying to (\ref{pertrw})  any gauge transformation that is non trivial 
in the $\ell>1$ sector, spoils the $h^-=0$ condition.
\item For $\ell=1$ the only possible gauge invariant metric field is \cite{Sarbach:2001qq} 
\begin{equation} \label{z}
\mathcal{Z} :=\tilde \e^{cd} \td_c \left(  \frac{h_d^{(\ell=1)}}{r^2} \right)
\end{equation}
\end{enumerate}
From now on we  work in  RW gauge (\ref{pertrw}),
 then   we set $h^-=0$  in (\ref{pert1}) and replace $h_a^-$ with $h_a=(h_a)^{(\ell=1)} 
+ h_a^{>1}$. 
With this choice 
 the absolute value $g$ of the determinant of the metric  agrees (to linear order) with the absolute value $g_o$ 
of the unperturbed metric determinant, then 
\begin{equation} \label{det}
\sqrt{g} =  \sqrt{g_o} = r^2 (\tilde g)^{1/2} (\hat g) ^{1/2}, \;\; \tilde g = - \text{det}(\tilde g_{ab}),  \;\; \hat g =  \text{det}(\hat g_{AB}).
\end{equation}
To linear order the  inverse metric is
$$
g^{\a \b} = \left( \begin{array}{cc} \tilde g^{ab} & -\ve \; r^{-2} \, \hat \e^{BC} \hd_C h^a \\ -\ve \; r^{-2}\, \hat \e^{AC} \hd_C h^b
 & r^{-2} \; \hat g^{AB} \end{array} \right). 
$$
This is  used   to raise the indexes of the perturbed Maxwell field $F_{\a \b} =  Q r^{-2} \tilde \e_{ab} + \ve \f_{\a \b}$. The result is 
\begin{align}\nonumber
F^{ab} &= E_o \tilde \e^{ab}, \;\;\; E_o \equiv Q r^{-2}\\
F^{aB} &= \ve \; r^{-2} \hat \e^{BC} \hd_C \td^a A - \ve \; E_o r^{-2} \tilde \e^{ad} \hat \e^{BC} \hd_C h_d \;\;  (= -F^{Ba})\\
F^{AB} &= - \ve \,r^{-4} \hat \e^{AB} \hd^C \hd_C A \nonumber
\end{align}
Taking advantage of (\ref{det})  we find that 
 the linearized  Maxwell  equation (\ref{3}) can be 
written 
\begin{equation}
0 =\frac{1}{\sqrt{g}} \p_{\a} \left(\sqrt{g} F^{\a \b} \right) = \frac{1}{\sqrt{g_o}} \p_{\a} \left(\sqrt{g_o} (F_o^{\a \b} + \ve \f^{\a \b})\right) =
\ve \frac{1}{\sqrt{g_o}} \p_{\a} \left(\sqrt{g_o} \, \f^{\a \b} \right). 
\end{equation}
and using  $\sqrt{g_o}= r^2  (\tilde g)^{1/2} (\hat g) ^{1/2}$,  we find that the $\beta=b$ equation above is trivial 
whereas the $\beta=B$ equation gives  %(this fixes an error in equation (5.16) in \cite{Ishibashi:2011ws}, chequear!) 
\begin{equation}\label{divf1}
0 =  \frac{1}{\sqrt{g_o}} \p_{\a} \left(\sqrt{g_o} \, \f^{\a B} \right) =  r^{-2} \hat \e^{BC} \hd_C\; 
 [ \td^a \td_a A  + r^{-2} \hd^D \hd_D A-
 \tilde \e^{ad} \td_a \left( Q r^{-2} h_d \right)], 
\end{equation} 
which, since $A, h_a \in L^2(S^2)_{>0}$,  is equivalent to 
\begin{equation}\label{divf}
0 =  \td^a \td_a A  + r^{-2} \hd^D \hd_D A-
 \tilde \e^{ad} \td_a \left( Q r^{-2} h_d \right).
\end{equation}

The linearized Einstein's equations (\ref{1}) are equivalent to the set (\ref{Ea}) and (\ref{E}). After a lengthly calculation we find 
\begin{equation} \label{G-T-}
G^- = \td^a h_a^{>1}, \;\;T^- = 0.
\end{equation}
 We also find that 
\begin{equation} \label{Ta}
8 \pi T_a^- = -\frac{2Q}{r^{2}} \tilde  \e_a{}^b \td_b A + \frac{Q^2}{r^{4}} h_a
\end{equation}
and 
\begin{equation} \label{Ga}
-2r^2 G_a^- = \tilde \e_a{}^b \td_b \left( r^4 \, \tilde \e^{cd} \td_c \left(\frac{h_d}{r^2} \right) \right) +\hd^B \hd_B h_a +
\left( \td^c \td_c r^2 + 4 r^2 \Lambda \right) h_a.
\end{equation}

\subsubsection{$\ell>1$ modes.}

Since $h^-=0$, equations (\ref{E}) and (\ref{G-T-}) give $\td^a h_a^{>1}=0$. The solution of this equation is 
\begin{equation}\label{ha}
h_a^{>1} = \tilde \e_a{}^b \td_b (Z),  \;\; \; Z \in L^2(S^2)_{>1}
\end{equation}
for some potential $Z$,  defined  up to the sum of a  function of $(\theta, \phi)$:
\begin{equation} \label{Zf}
Z(t,r,\theta,\phi) \to Z(t,r,\theta,\phi)+q(\theta,\phi).
\end{equation}
Inserting (\ref{Ta}), (\ref{Ga})  and (\ref{ha})  into the projection onto $L^2(S^2)_{>1}$ of the  linearized Einstein equation (\ref{Ea}) gives 
\begin{equation}\label{above}
\tilde \e_a{}^b \td_b \left[ r^4 \td^c \left(\frac{\td_c Z}{r^2} \right) + \hd^B \hd_B Z \right] + 
\left( \td^c \td_c r^2 + 2 r^2 \Lambda + 2 \frac{Q^2}{r^2} \right)  \e_a{}^b \td_b Z -4Q \e_a{}^b \td_b A^{>1} = 0
\end{equation}
The fact that 
\begin{equation} \label{fact}
\td^c \td_c r^2 + 2 r^2 \Lambda + 2 \frac{Q^2}{r^2} =2, 
\end{equation}
makes it possible to pull the operator $\e_a{}^b \td_b$ to the left in (\ref{above}) 
Since the kernel of $\e_a{}^b \td_b$ acting on $\mathcal{N}$-scalar fields are the 
$\mathcal{N}$-constants (i.e., functions of $(\theta,\phi)$), 
we can   lift $\e_a{}^b \td_b$ from this equation and get
\begin{equation} \label{eq1d2}
r^4 \td^c \left(\frac{\td_c Z}{r^2} \right) + \hd^A \hd_A Z + 2 Z =4QA^{>1} + z(\theta,\phi).
\end{equation}
We now use the freedom (\ref{Zf}) and choose $q(\theta,\phi)$ to cancel $z(\theta,\phi)$. This is possible 
since the operator $Z \to \hd^B \hd_B Z + 2 Z$ is invertible in $L^2(S^2)_{>1}$. This choice of $Z$ is equivalent to setting 
$z(\theta,\phi)=0$ in (\ref{eq1d2}). The resulting  equation is 
 equivalent to  the four dimensional wave equation
\begin{equation} \label{master}
\nabla_{\a} \nabla^{\a}  \Phi  + \left( \frac{8M}{r^3}- \frac{6Q^2}{r^4}- \frac{2 \L}{3} \right) \Phi
= \frac{4Q}{r^3} \; W^{>1},
\end{equation}
where
\begin{equation} \label{wfi0}
W^{>1}= \frac{A^{>1}}{r}, \;\;\; \Phi= \frac{Z}{r^2} \;\;  \in L^2(S^2)_{>1}.
\end{equation}

The equation obtained  after inserting (\ref{ha}) into the projection onto $L^2(S^2)_{>1}$ of the linearized Maxwell equation (\ref{divf}) 
and then using (\ref{eq1d2}), 
\begin{equation} \label{mas2}
\td^a \td_a A^{>1} + \frac{ \hd^B \hd_B A^{>1}}{r^2}-\frac{4Q^2}{r^4} A^{>1} = -\frac{Q^2}{r^4} (\j2+2) Z,
\end{equation}
also admits the form of a four dimensional wave equation linking $W$ and $\Phi$ above:
\begin{equation} \label{master2}
\nabla_{\a} \nabla^{\a} W^{>1} +  \left( \frac{2M}{r^3}- \frac{6Q^2}{r^4}- \frac{2 \L}{3} \right) W^{>1}
= - \frac{Q}{r^3} (\j2+2)\Phi.
\end{equation}
Here we used the facts that on scalar fields $\hd^A \hd_A=\j2$ and  $\td^c \td_c r = df/dr$. \\

Note that all steps above can be reversed: the system of equations (\ref{master}) (\ref{master2}) is equivalent 
to the system (\ref{eq1d2}) (\ref{mas2}) which, using (\ref{ha}) and the definitions (\ref{wfi0}) imply the LEME.
We conclude that the odd sector $\ell>1$ LEME are entirely equivalent to the system of (four dimensional) wave equations 
(\ref{master}) and 
(\ref{master2}) coupling the fields $\Phi$ and $W$. These fields 
 are potentials from which the ($\ell>1$ piece of the) metric perturbation in the RW gauge ${}^{RW}h_{\a \b}$
is given by equations (\ref{pertrw}), (\ref{ha}) and (\ref{wfi0}), 
and that of the electromagnetic field perturbations  by the second equation (\ref{pert1}) with $A^-=A^{>1}=rW$.  
The map $(\Phi,W) \to ({}^{RW}h_{\a \b},F_{\a \b})$ is injective. Otherwise, there is a $(\Phi_o,W_o)\neq (0,0)$ sent to $(0,0)$.  
In view of the second equation (\ref{pert1}) and $r W_o = A^{>1}$, it must be $\hd^C \hd_C W_o=0$ and therefore $W_o=0$ which, 
inserted in (\ref{master2}), gives   $ (\j2+2)\Phi=0$, and this is equivalent to $\Phi=0$ since  $\Phi \in L^2(S^2)_{>1}$. We conclude
that $W_o=\Phi_o=0$. 

\subsubsection{$\ell=1$ modes.}

The projection of the linearized Maxwell equation (\ref{divf1}) onto the three dimensional $\ell=1$ subspace  $L^2(S^2)_{\ell=1} \subset L^2(S^2)$  is 
\begin{equation}\label{divf-l=1}
Q \mathcal{Z}=  \td^a \td_a A^{(\ell=1)}  -2  \frac{ A^{(\ell=1)}}{r^2},
\end{equation}
where $\mathcal{Z}$, introduced in (\ref{z}),  is the only gauge invariant field of the $\ell=1$ metric perturbation (see (\ref{gauge})). \\
The projection of the LEME  (\ref{Ea}), using (\ref{Ta}), (\ref{Ga})  and (\ref{fact}) is
\begin{equation} \label{leme-l=1}
\tilde \e^{a b} \td_b \left[ r^4 \mathcal{Z} - 4 Q A^{(\ell=1)} \right]=0, 
\end{equation}
this implies that $ r^4 \mathcal{Z} - 4 Q A^{(\ell=1)}$ is a function of $(\theta,\phi)$ that, for convenience, 
we call $6M S(\theta,\phi)$, therefore 
\begin{equation} \label{za}
\mathcal{Z} = \frac{4QA^{(\ell=1)} + 6M  S(\theta,\phi)}{r^4},
\end{equation}
Since  both $\mathcal{Z}$ and 
$A$ belong to $L^2(S^2)_{(\ell=1)}$, it must be 
\begin{equation} \label{S1}
S= \sqrt{ \frac{4 \pi}{3}} \sum_{m=1}^3 j^{(m)} S_{(\ell=1,m)}. 
\end{equation}
where the $S_{(\ell=1,m)}$  are a real orthonormal basis of $L^2(S^2)_{\ell=1}$, such as 
\begin{equation}
S_{(\ell=1,m=1)}=\sqrt{ \frac{3}{4 \pi}} \; \sin (\theta) \cos(\phi), \; 
S_{(\ell=1,m=2)}=  \sqrt{ \frac{3}{4 \pi}} \; \sin (\theta) \sin(\phi), \;
S_{(\ell=1,m=3)}= \sqrt{ \frac{3}{4 \pi}} \cos(\theta).
\end{equation}
Inserting (\ref{za}) in (\ref{divf-l=1})  gives 
\begin{equation} \label{eqa1}
   \td^a \td_a  A^{(\ell=1)}  -\left(  \frac{2}{r^2}+\frac{4Q^2}{r^4}\right) A^{(\ell=1)} = \frac{6MQ}{r^4} S.
\end{equation}
The general solution of  the $\ell=1$ equations is therefore obtained by choosing $S(\theta,\phi)$ (equivalently,
 the  $j^{(m)}$ in (\ref{S1}), which, 
 as we will show below, are 
 infinitesimal angular momentum components) and a solution $A^{(\ell=1)}$ of (\ref{eqa1}). Then   
  $\mathcal{Z}$ is given by (\ref{za}) and 
$h_a$ obtained, mod gauge transformation, from (\ref{z}). \\

A particular solution of the {\em inhomogeneous} equation (\ref{eqa1}) {\em when} $
S=a \cos (\theta) \propto S_{(\ell=1,3)}$  is obtained 
by  considering the  KN(A)dS black hole solution  with mass $M$ and angular momentum $J=aM$ along 
the $\theta=0$ axis in Boyer 
 Lindquist coordinates (see, e.g., \cite{Gibbons:1977mu}, equations (2.19)-(2.24)),  and letting the angular momentum play 
the role of $\varepsilon$ in (\ref{lf}).
If 
we Taylor expand the metric around $a=0$ we obtain
\begin{equation}
g_{\a \b} = g_{\a \b}^{RN} +  h_{\a \b} + \mathcal{O}(a^2)
\end{equation}
where $g_{\a \b}^{RN}$ is the Reissner-Nordstr\"om  metric (\ref{rn})-(\ref{f}), 
\begin{align} \label{hti}
h_{\phi t} &= h_{t \phi} = a (f-1) \sin^2 (\theta) = \hat \e_{\phi}{}^{\theta} \p_{\theta} h_t,\\
h_{\theta t} &= h_{t \theta} =  0 =  \hat \e_{\theta}{}^{\phi} \p_{\phi} h_t, \label{hfi}
\end{align}
the remaining components being trivial. We recognize that $h_{\a \b}$ is an $\ell=1$  perturbation 
with $j^{(1)}=j^{(2)}=0$. 
 Since $\widehat \e = \sin (\theta) \; d\theta \wedge d\phi$, 
equations  (\ref{hti})-(\ref{hfi})  and 
$0= h_{r \phi}= h_{r \phi}$  imply that 
\begin{equation} \label{ha-1}
h_t= a(f-1) \cos(\theta), \;\;\; h_r=0, 
\end{equation}
which, inserted in (\ref{z}) gives 
\begin{equation}\label{Z0}
\mathcal{Z} = a \cos(\theta) \; \frac{6Mr-4Q^2}{r^5} =: \mathcal{Z}_{KN}^o
\end{equation}
The nonzero components of 
the Maxwell vector potential $A_{\a}$ for the electromagnetic field $F_{\a \b}$ of the KN(A)dS black hole are   (equation (2.24)
 in \cite{Gibbons:1977mu})
\begin{equation} \label{A-1}
A_t= \frac{Q}{r} + \mathcal{O}(a^2), \;\; A_{\phi} = -\frac{Q \sin^2(\theta)}{r} \; a +  \mathcal{O}(a^3), \;\; A_r=A_{\theta}=0,
\end{equation}
whose exterior derivative, consistently, gives a $j^{(1)}=j^{(2)}=0$, $\ell=1$ odd perturbation  of the 
electromagnetic
 field with (see the second equation (\ref{pert1}))
\begin{equation} \label{A0}
A^{o}_{KN} = -\frac{aQ \cos(\theta)}{r}. 
\end{equation}
Changing the axis of rotation we can easily guess from $A_{KN}^o$ 
  a particular solution of the inhomogeneous equation 
 (\ref{eqa1}) for the arbitrary $S$ given  in (\ref{S1}):
\begin{equation}\label{akn}
A_{KN}=    -\frac{Q}{r}\; \; \sqrt{ \frac{4 \pi}{3}}  \sum_{m=1}^3 j^{(m)} S_{(\ell=1,m)}.
\end{equation}
This corresponds to a slowly rotating  KN(A)dS black hole with angular momentum components 
$ j^{(m)}$, for which
\begin{equation} \label{zkn}
\mathcal{Z}_{KN}= \frac{6Mr-4Q^2}{r^5}  \; \; \sqrt{ \frac{4 \pi}{3}}  \sum_{m=1}^3 j^{(m)} S_{(\ell=1,m)}.
\end{equation}
The general solution  of (\ref{eqa1}) is obtained by adding to the particular solution (\ref{akn}) the general 
solution of the {\em homogeneous}
 equation (\ref{eqa1}) :
\begin{equation} \label{eqa1h}
   \td^a \td_a  A^{(\ell=1)}_h  -\left(  \frac{2}{r^2}+\frac{4Q^2}{r^4}\right) A^{(\ell=1)}_h = 0.
\end{equation}
We recognize that this is the $\ell=1$ analogue of equation (\ref{mas2}), then we introduce 
\begin{equation} \label{W1}
W^{(\ell=1)}:= \frac{A^{(\ell=1)}_h}{r}
\end{equation}
as in the $\ell>1$ case 
 and, using  equations (\ref{mas2}) and (\ref{master2}), we find that (\ref{eqa1h})  is  equivalent to 
\begin{equation} \label{master2-1}
\nabla_{\a} \nabla^{\a} W^{(\ell=1)} +  \left( \frac{2M}{r^3}- \frac{6Q^2}{r^4}- \frac{2 \L}{3} \right) W^{(\ell=1)}
= 0.
\end{equation}
The solution of this equation is 
\begin{equation}\label{gshe}
r  W^{(\ell=1)} =A^{(\ell=1)}_h =  \sqrt{ \frac{4 \pi}{3}} \sum_{m=1}^3 A^{(m)}_h(t,r)  \; S_{(\ell=1,m)}(\theta,\phi),
\end{equation}
where each of the $A^{(m)}_h(t,r)$  satisfy the 1+1 wave equation (\ref{eqa1h}) 
which, introducing  a  {\em tortoise} radial coordinate defined by 
\begin{equation}
r^* = \int^r \frac{dr'}{f(r')}, 
\end{equation}
is equivalent to 
\begin{equation}\label{rw1h}
(\p_t^2 - \p_{r^*}^2 + V) A^{(m)}_h=0, \;\;\; V= f \; \left(  \frac{2}{r^2}+\frac{4Q^2}{r^4}\right).
\end{equation}

Adding (\ref{gshe}) to (\ref{akn}) gives the general solution to  (\ref{eqa1}) for the choice (\ref{S1}), this has to be 
inserted into (\ref{za})  to obtain $\mathcal{Z}$.\\

\noindent
Summarizing:\\

\begin{enumerate}
\item The $\ell=1$ gauge invariant fields  are $\mathcal{Z}$ and $A^{(\ell=1)}$. 
 The general solution of the 
$\ell=1$ LEME equations are parametrized by: i)  three constants $j^{(m)}$ 
that give $S$ (see equation (\ref{S1})) and the particular solution $A_{KN}$ of (\ref{eqa1})  
given in (\ref{akn}) and ii) 
 three solutions $A^{(m)}_h(t,r)$ of  (\ref{rw1h}) which 
span $A^{(\ell=1)}_h$ (see(\ref{gshe})). Using these gives 
\begin{equation} \label{A1gral}
A^{(\ell=1)}=A^{(\ell=1)}_h+A_{KN} =    \sqrt{ \frac{4 \pi}{3}} \sum_{m=1}^3 \left( A^{(m)}_h(t,r)  -\frac{Q}{r} j^{(m)}\right)
 \; S_{(\ell=1,m)}(\theta,\phi), 
\end{equation}
and then $\mathcal{Z}$ is obtained using  (\ref{za}), (\ref{S1}) and (\ref{A1gral}):
\begin{equation}
\mathcal{Z} = \sqrt{ \frac{4 \pi}{3}}  \sum_{m=1}^3 \left( \frac{4Q}{r^4}A^{(m)}_h(t,r)
+ \frac{6Mr-4Q^2}{r^5}   j^{(m)}
  \right)  \; S_{(\ell=1,m)}(\theta,\phi).
\end{equation}
Note that the $j^{(m)}$ in (\ref{A1gral}) are well defined: 
if we assumed that the coefficients of the harmonics of $A^{(\ell=1)}$ in (\ref{A1gral}) can be split in 
two different ways, say
$$ A^{(m)}_h(t,r)  - j^{(m)} Q/r= \tilde A^{(m)}_h(t,r)  -\tilde j^{(m)} Q/r,$$ 
this would imply that $ (\tilde j^{(m)}-j^{(m)}) Q/r$ is a solution of the homogeneous equation 
(\ref{eqa1}), which is false unless $\tilde j^{(m)}=j^{(m)}$ and thus  $\tilde A^{(m)}_h(t,r)= A^{(m)}_h(t,r)$.\\
\item In a gauge where $h_r^{(\ell=1)}=0$, we have 
% (we will call such an $\ell=1$ gauge ``Regge-Wheeler", although this terminology is only applied to the $\ell>1$ sector)
 $\mathcal{Z}= \p_r (h_t^{(\ell=1)}/r^2)$, then 
\begin{equation} \label{ha-1-g}
h_a ^{(\ell=1)} dx^a =   dt \; r^2 \int^r \mathcal{Z} \; dr = \sqrt{ \frac{4 \pi}{3}} \sum_{m=1}^3 \left[  (f(r)-1) j^{(m)}  + 4Q r^2  B^{(m)}(t,r) \right] 
  S_{(\ell=1,m)} \; dt,
\end{equation}
where the $B^{(m)}$ are any three functions of $(t,r)$ such that $\p_r  B^{(m)} = r^{-4} \; A^{(\ell=1,m)}_h$ 
 (the ambiguity in the $B^{(m)}$'s gives a term $g(t)r^2 dt$ in $h_a dx^a$ which is pure gauge.) 
\end{enumerate}
It is important to note that our results are consistent with the  black hole uniqueness theorems, 
which state that any asymptotically flat stationary axi-symmetric (electro)-vacuum black hole
 is a member of the Kerr-Newman 
family. For perturbations around a Schchwarzschild black hole, $A \equiv 0$ and $Q=0$, so the $\ell=1$ equation 
(\ref{divf-l=1}) is void and the remaining equations give $\mathcal{Z}= 6M S(\theta,\phi)/r^4$ 
and then $h_a dx^a \propto \sum_m (f(r)-1)j^{(m)} S_{(\ell=1,m)}(\theta,\phi) +$ gauge terms (see (\ref{ha-1-g})), 
which corresponds to 
a slowly rotating Kerr black hole, as expected. In the $Q \neq 0$ case, however, we must rule out the existence of 
time independent solutions of the homogeneous equation (\ref{eqa1h}) (equivalently, equation (\ref{rw1h})) that 
behave properly at the horizon and for large $r$,  
to guarantee that the only time independent $\ell=1$  solution is $A=A_{KN}$ and 
 $\mathcal{Z}=\mathcal{Z}_{KN}$. Assume on the contrary that there is  a 
well behaved time independent 
solution $A(r)$ of equation (\ref{rw1h}):
\begin{equation} \label{ss}
f U A=   \p_{r^*}^2 A = f\p_r(f \p_r A), \;\;\; U=\frac{2}{r^2}+\frac{4Q^2}{r^4}.
\end{equation}
Let $r=r_h$ be the horizon radius, then for $r \simeq r_h$, $f = 2 \kappa (r-r_h) + \mathcal{O}((r-r_h)^2)$, where $\kappa>0$ is 
the surface gravity. Inserting this in (\ref{ss}) gives, for the two dimensional local solution space near $r=r_h$, 
\begin{equation}\label{series}
A = \alpha \left[ 1 + \frac{1}{\kappa r_h^2} \left(1+\frac{ 2Q^2}{r_h^2}\right) (r-r_h) + ...\right] 
+ \beta \left[ \ln\left(\frac{r-r_h}{r_h}\right) + ... \right]
\end{equation}
If $A$ is well behaved at the horizon then $\beta=0$. 
This implies (without loss of generality we assume  that $\alpha>0$) 
that at a point for $r_o>r_h$ sufficiently close to $r_h$, both $A>0$ and $\p_r A>0$ (see (\ref{series})).
Thus $\p_{r^*} A >0$ at this large negative $r^*$ value $r^*(r_o)$ and integrating equation (\ref{ss}) from $r^*(r_o)$ to 
the right and noting that $U>0$, 
we learn that $A, \p_{r^*} A $ and $\p_{r^*}^2 A$ are all 
positive for $r^*>r^*(r_o)$ and so $A \to \infty$ as $r^* \to \infty$. 
This means that time independent solutions of (\ref{rw1h})  
that are well behaved at the horizon diverge for large $r^*$. Therefore, the only acceptable stationary
$\ell=1$ solution of the LEME is then $A=A_{KN}$ and 
 $\mathcal{Z}=\mathcal{Z}_{KN}$, as we wanted to prove.

\section{Non-modal linear stability for odd perturbations} \label{nmsSect}

From the results of the previous Section follows that, introducing the field 
\begin{equation}\label{wtotal}
W :=  W^{(\ell=1)} + W^{>1}  \in L^2(S^2)_{>0},
\end{equation}
we may recast (\ref{master}), (\ref{master2}) and   (\ref{master2-1}) as the following system 
of equations for the fields $\Phi \in L^2(S^2)_{>1}$ and $W \in L^2(S^2)_{>0}$:
\begin{align} \label{Master}
&\nabla_{\a} \nabla^{\a}  \Phi  + \left( \frac{8M}{r^3}- \frac{6Q^2}{r^4}- \frac{2 \L}{3} \right) \Phi
= \frac{4Q}{r^3} \; W^{>1},\\
 \label{Master2}
&\nabla_{\a} \nabla^{\a} W +  \left( \frac{2M}{r^3}- \frac{6Q^2}{r^4}- \frac{2 \L}{3} \right) W
= - \frac{Q}{r^3} (\j2+2)\Phi,
\end{align}
 It also follows that 
the set $\mathcal{L}_-$ of odd solutions $(h_{\a \b},\f_{\a \b})$ of the LEME (\ref{1})-(\ref{3}) mod gauge equivalence,  
equation (\ref{L-}),  can be parametrized 
by the three 
gauge invariant constants $j^{(m)}$ and 
   and the gauge invariant fields  $\Phi \in L^2(S^2)_{>1}$ and $W\in L^2(S^2)_{>0}$,   subject to the  system of equations 
(\ref{Master}) and (\ref{Master2}): 
\begin{equation} \label{L-2}
\mathcal{L}_- = \{ (j^{(m)},  \Phi, W) \; | \; \text{ equations (\ref{Master}) \text{ and }(\ref{Master2}) hold } \}.
\end{equation}
This parametrization of $\mathcal{L}_-$ is interesting because is given in terms of 
gauge invariant constants and scalar fields satisfying wave equations. 
There is, however,  a distinction 
 between the constants $j^{(m)}$, which are the components of the infinitesimal angular 
momentum corresponding to perturbations within the Kerr-Newman (A)dS  family, 
and the scalar fields 
$\Phi$ and $W$, which, although  convenient as potentials to solve the $\ell>1$ LEME,   have no direct 
physical interpretation. \\ 

We will prove in Section \ref{invs} that  there are two gauge invariant, 
 physically meaningful scalar fields $\q$ and $\mathcal{F}$, that are 
directly associated to the effects of the perturbation on the curvature and on the strength of the Maxwell field, 
 and contain the same information as $ (j^{(m)}, \Phi, W)$. 
These fields accomplish the first objective 
of the nonmodal approach. 
% These fields  also satisfy a four dimensional coupled wave equation system. 
\\

The second goal of the nonmodal approach  is to show that, if
 $\Lambda \geq 0$,  the chosen fields $\q$ and $\f$ are bounded on the outer static region by constants that depends on the initial data 
of the
perturbation on a Cauchy surface. This  makes precise the notion of nonmodal linear stability. 
To prove the pointwise boundedness we use the system of differential equations satisfied by $\q$ and $\f$,  
but we need to constrain the generality of solutions of the LEME and limit ourselves 
to the  case were  perturbation theory makes sense, which is when  perturbations  preserve the asymptotic flatness (if $\Lambda=0$) 
or de Sitter character (if $\Lambda>0$) of the background. No boundedness result is to be expected if we do not proceed so. 
Imagine, e.g.,  
that  in the $\Lambda=0$ case we take 
 initial data  $(\Phi, \dot \Phi)$ and $(W,\dot W)$ for the system  (\ref{master}) (\ref{master2}) on a $t$ slice such that 
$\Phi$ 
grows arbitrarily for large $r$. On one hand, there could be no pointwise boundedness result  on the outer static region 
for such  perturbation, 
on the other hand, the associated metric perturbation would spoil  asymptotic flatness. Treating it 
as a perturbation would be inconsistent since 
 the ``smallness'' of  $\varepsilon$ in $g_{\a \b} + \varepsilon h_{\a \b}$ would be overcome for large $r$ by the 
growth of $h_{\a \b}$. 
Thus, decay properties for large $r$ in the $\Lambda=0$ case ($r \to r_c$ if $\Lambda>0$) must be imposed on the initial data. \\
For simplicity, and to avoid complicated statements (which would inevitably  involve separate conditions for $\Lambda=0$ and $\Lambda>0$), 
we will,  following \cite{Dotti:2016cqy} and \cite{Kay:1987ax}, restrict our considerations to 
 perturbations compactly supported away from $r= \infty$ 
if $\Lambda=0$ ($r= r_c$ if $\Lambda>0$). This restriction  should not be an obstacle to generalize to milder decay conditions 
(see, e.g. the proof of Theorem 6 in \cite{Dotti:2016cqy}), and it serves our purposes of generalizing   
the results in  \cite{Dotti:2016cqy} to odd perturbations of charged black holes. 
Of course, the $r$ extent of the field support 
will  grow with $t$ as the perturbations 
evolves.\\
 The scalar fields $\q$ and $\f$ might grow high in small regions without compromising energy conservation. 
We will show, following \cite{Kay:1987ax}, that this is not the case, and that 
  it is possible to place pointwise bounds on $\q$ and $\f$ 
in the outer static region, establishing in this way the nonmodal stability of this region. 

\subsection{Measurable effects of the perturbations}  \label{invs}

Consider  the first order perturbation fields 
\begin{equation} \label{fopf}
\q =\delta (\tfrac{1}{48} C^*_{\alpha \beta \gamma \delta} C^{\alpha \beta \gamma \delta}), \;\;\;
\mathcal{F} =\d (F^*_{\alpha \beta} F^{\alpha \beta}),
\end{equation}
where $\d$ stands for derivative at $\e=0$ for a mono parametric family of solutions of the Einstein Maxwell equations,  as in equations 
(\ref{lf})-(\ref{3}),  the $\e=0$ solution being (\ref{rn})-(\ref{max}),  and a  star denotes Hodge dual
\begin{equation}
F^*_{\alpha \beta}= \tfrac{1}{2} \e_{a \b \g \d} F^{\g \d}, \;\;\; 
C^*_{\alpha \beta \g \d}= \tfrac{1}{2} \e_{a \b \mu \nu} C^{\mu \nu}{}_{\g \d},
\end{equation}
$\e_{a \b \mu \nu}$ being the volume form. Since $C^*_{\alpha \beta \gamma \delta} C^{\alpha \beta \gamma \delta} 
= F^*_{\alpha \beta} F^{\alpha \beta}=0$ in the background, the fields $\q$ and $\mathcal{F}$ are  gauge invariant 
\cite{Dotti:2016cqy} and thus suitable to analyze the effects of the perturbation in the geometry and the 
electromagnetic field. The obvious advantage of   {\em scalar} fields over  higher rank tensor fields,  
is that for the latter  there is no entirely natural concept of being ``large'' or ``small'' in a Lorentzian manifold, 
and we need this notion 
to quantify  the strength of the perturbation.\\ 

It follows from equations (\ref{pert1}), (\ref{ha}), (\ref{wfi0}),  (\ref{W1}), (\ref{A1gral})  and (\ref{fopf}),  that $\f$ 
depends on up to two derivatives of $W$ whereas 
$\q^{>1}$ depends 
on up to four derivatives of $\Phi$. However, using repeatedly the  LEME (\ref{Master})-(\ref{Master2}) and 
calculating 
separately the $\ell=1$ contributions   to  $\f^{(\ell=1)}$ and $\q^{(\ell=1)}$ coming from 
(\ref{akn})-(\ref{zkn}),  
 we can simplify considerably the resulting 
expressions 
and find, with the help of symbolic manipulation programs \cite{grtensor},  that, {\em for solutions of the LEME}, there is 
a simple relation between $\f$ and $\q$ on one side,  and $\Phi, W$ and the $j^{(m)}$ on the other: 
\begin{equation} \label{F}
\mathcal{F} =\frac{8Q^2}{r^5} \sqrt{ \frac{4 \pi}{3}}  \sum_{m=\pm1,0} j^{(m)} S_{(\ell=1,m)}+ 
 \frac{4Q}{r^3} \,\j2 W, 
\end{equation}
and
\begin{equation} \label{Q}
\mathcal{Q} = \frac{2(Q^2-Mr)}{r^6} \left[  \frac{(3Mr-2Q^2)}{r^3} \sqrt{ \frac{4 \pi}{3}}  \sum_{m=\pm1,0} j^{(m)} S_{(\ell=1,m)}+ 
 \left( \frac{\j2(\j2+2)}{4} \Phi - \frac{Q}{r} \j2 W\right) \right].
\end{equation}
The above equations allow us to prove that $\q$ and $\f$ contain all the gauge invariant information of a given  
perturbation, and  that they satisfy a coupled system of wave equations. 

\begin{thm} Consider the set of odd solutions $(h_{\a \b},\f_{\a \b})$ of the LEME (\ref{1})-(\ref{3}) around a Reissner-Nordstr\"om  
(A)dS black hole background and the set of  perturbed fields $(\f,\q)$ defined in (\ref{fopf}):
\begin{itemize}
\item[(i)] The map $([h_{\a \b}], \mathcal{F}_{\a \b}) \to (\f,\q)$ is injective: it is possible to reconstruct $\f_{\a \b}$ and 
a representative 
of $[h_{\a \b}]$ from $(\f,\q)$. 
\item[(ii)] Let \begin{equation} \label{KF}
\mathcal{K} = \left( \frac{2r^6 \q}{Q^2-Mr}+r^2 \f \right).
\end{equation}
The gauge invariant scalar fields $\f$ and $\q$ satisfy the system of wave equations
\begin{align} \label{Inv1}
&\left[ \nabla_{\a} \nabla^{\a}    + \left( \frac{8M}{r^3}- \frac{6Q^2}{r^4}- \frac{2 \L}{3} \right) \right] \mathcal{K}
= (\j2+2) \f\\ \label{Inv2}
& \left[ \nabla_{\a} \nabla^{\a}  +  \left( \frac{2M}{r^3}- \frac{6Q^2}{r^4}- \frac{2 \L}{3} \right) \right] \left(\frac{r^3}{4Q} \f \right)  
= - \frac{Q}{r^3} \mathcal{K}
\end{align}
\item[(iii)] Let $\tilde{\mathcal{K}}$ and $\tilde{\f}$ be $\ell \geq 1$ scalar fields satisfying (\ref{Inv1}) and (\ref{Inv2}), and  $\tilde{\q}= (Q^2-Mr) 
(\tilde{\mathcal{K}}-r^2\tilde{\f})/(2r^6)$ 
(c.f. equation (\ref{KF})). There exists an $\ell \geq 1$ 
 solution $(\tilde{h}_{\a \b}, \tilde{F}_{\a \b})$ of the LEME for which $\q$ and $\f$ defined in 
 (\ref{fopf}) respectively agree with $\tilde{\q}$ and $\tilde{\f}$.
\end{itemize}
\end{thm}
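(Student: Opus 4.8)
The engine of the proof is the explicit dictionary (\ref{F})--(\ref{Q}) between the physical scalars $(\f,\q)$ and the potential data $(j^{(m)},\Phi,W)$ that, by the previous Section, parametrizes $\mathcal{L}_-$. The plan is to first recast this dictionary using the angular operator $\j2$, exploiting that $\j2$ commutes with $\nabla_{\a}$, with $\td_a$, and with multiplication by functions of $r$, and that $\j2=-\ell(\ell+1)$ on the $\ell$ eigenspace. From (\ref{F}), (\ref{Q}) and the definition (\ref{KF}) I would record the algebraic consequences that drive everything: on $L^2(S^2)_{>1}$ one has $\f^{>1}=\tfrac{4Q}{r^3}\j2 W^{>1}$ and $\mathcal{K}^{>1}=\j2(\j2+2)\Phi$, while on $L^2(S^2)_{\ell=1}$ the $W^{(\ell=1)}$ contributions to $\mathcal{K}$ cancel, leaving the purely stationary combination $\mathcal{K}^{(\ell=1)}=\tfrac{12M}{r^2}\,u$, with $u=\sqrt{4\pi/3}\sum_m j^{(m)}S_{(\ell=1,m)}$, together with $\f^{(\ell=1)}=-\tfrac{8Q}{r^4}A^{(\ell=1)}$. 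These identities, combined with the invertibility of $\j2$ on $L^2(S^2)_{>0}$ and of $\j2(\j2+2)$ on $L^2(S^2)_{>1}$, are the whole content of the theorem.

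For (i) I would reconstruct $(j^{(m)},\Phi,W)$ from $(\f,\q)$ mode by mode. On $\ell\geq 2$, invert $\j2$ in $\f^{>1}=\tfrac{4Q}{r^3}\j2 W^{>1}$ to obtain $W^{>1}$, and invert $\j2(\j2+2)$ in $\mathcal{K}^{>1}=\j2(\j2+2)\Phi$ to obtain $\Phi$. On $\ell=1$, read the constants $j^{(m)}$ off $\mathcal{K}^{(\ell=1)}=\tfrac{12M}{r^2}u$, and recover $A^{(\ell=1)}=-\tfrac{r^4}{8Q}\f^{(\ell=1)}$ (equivalently $W^{(\ell=1)}$). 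Composing with the injective map $(j^{(m)},\Phi,W)\mapsto([h_{\a\b}],\mathcal{F}_{\a\b})$ established in the previous Section then reconstructs $\mathcal{F}_{\a\b}$ and a representative of $[h_{\a\b}]$, proving injectivity.

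For (ii) I would verify (\ref{Inv1})--(\ref{Inv2}) by direct substitution, writing $\mathcal{K}=\j2(\j2+2)\Phi+\tfrac{12M}{r^2}u$ and $\tfrac{r^3}{4Q}\f=\j2 W+\tfrac{2Q}{r^2}u$. Applying the operator on the left of (\ref{Inv1}) to $\mathcal{K}$ and commuting $\j2(\j2+2)$ through, the $\Phi$ term collapses to $(\j2+2)\f$ by (\ref{Master}), while the stationary term drops out because $\tfrac{12M}{r^2}u$ is a zero mode of that operator; this is (\ref{Inv1}). Applying the operator on the left of (\ref{Inv2}) to $\tfrac{r^3}{4Q}\f$ and using (\ref{Master2}) on the $\j2 W$ term, together with the background identity $[\nabla_{\a}\nabla^{\a}+(\tfrac{2M}{r^3}-\tfrac{6Q^2}{r^4}-\tfrac{2\L}{3})](\tfrac{2Q}{r^2}u)=-\tfrac{12MQ}{r^5}u$, reproduces $-\tfrac{Q}{r^3}\mathcal{K}$, which is (\ref{Inv2}). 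The only genuine computations are the two scalar background identities, each an elementary $r$-derivative check using (\ref{f}).

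For (iii) I would run this reconstruction in reverse: given $\ell\geq 1$ fields $(\tilde{\mathcal{K}},\tilde\f)$ solving (\ref{Inv1})--(\ref{Inv2}), set $\Phi=[\j2(\j2+2)]^{-1}\tilde{\mathcal{K}}^{>1}$, $W^{>1}=\tfrac{r^3}{4Q}(\j2)^{-1}\tilde\f^{>1}$ and $A^{(\ell=1)}=-\tfrac{r^4}{8Q}\tilde\f^{(\ell=1)}$; commuting the invertible operators through (\ref{Inv1})--(\ref{Inv2}) forces $(\Phi,W)$ to obey (\ref{Master})--(\ref{Master2}), and the $\ell=1$ Maxwell equation (\ref{divf-l=1}) is exactly the $\ell=1$ content of (\ref{Inv2}), so the Section's correspondence delivers an odd LEME solution whose $(\f,\q)$, recomputed from (\ref{fopf}), return $(\tilde\f,\tilde\q)$. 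The hard part will be the $\ell=1$ Einstein equation (\ref{leme-l=1}): reading off the $j^{(m)}$ requires $r^2\tilde{\mathcal{K}}^{(\ell=1)}$ to be independent of $(t,r)$, equivalently $\tilde\e^{ab}\td_b(r^4\mathcal{Z}-4QA^{(\ell=1)})=0$ with $\mathcal{Z}=\tfrac{r^4}{Q^2-Mr}\tilde\q^{(\ell=1)}$. This stationarity is not implied by the $\ell=1$ reduction $[\nabla_{\a}\nabla^{\a}+(\tfrac{8M}{r^3}-\tfrac{6Q^2}{r^4}-\tfrac{2\L}{3})]\tilde{\mathcal{K}}^{(\ell=1)}=0$ of (\ref{Inv1}), which admits time-dependent solutions; what rescues it is that for every LEME solution the dynamical $A^{(\ell=1)}$ contributions cancel in $\mathcal{K}^{(\ell=1)}$, leaving only the stationary Kerr--Newman piece $\tfrac{12M}{r^2}u$. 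Showing that the coupled system together with the physically admissible (compactly supported, asymptotically flat or de Sitter) data singles out this form, ruling out spurious time-dependent $\ell=1$ modes in the spirit of the divergence argument following (\ref{series}), is the step I expect to demand the most care.
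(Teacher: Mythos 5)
Your treatment of parts (i) and (ii) coincides with the paper's proof: the same mode-by-mode inversion of $\j2$ on $\f^{>1}=\tfrac{4Q}{r^3}\j2 W^{>1}$ and of $\j2(\j2+2)$ on $\mathcal{K}^{>1}=\j2(\j2+2)\Phi$ for (i), and the same substitution into (\ref{Inv1})--(\ref{Inv2}) using (\ref{master}), (\ref{master2}), (\ref{master2-1}) and the fact that the wave operator in (\ref{Inv1}) annihilates $S_{(\ell=1,m)}/r^2$ for (ii). Your auxiliary identity for the operator of (\ref{Inv2}) acting on $\tfrac{2Q}{r^2}u$, where $u=\sqrt{4\pi/3}\sum_m j^{(m)}S_{(\ell=1,m)}$, is not written in the paper but follows from that same fact, since the two potentials differ by $6M/r^3$; so (ii) checks out.

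The substantive point is your final paragraph on part (iii). You are right that the $\ell=1$ sector is the delicate one; in fact the paper's own proof of (iii) silently restricts to $\ell>1$: it defines $\tilde{\Phi}$ and $\tilde{W}^{>1}$ from $\tilde{\mathcal{K}}^{>1}$ and $\tilde{\f}^{>1}$ only, and concludes that the reconstructed solution reproduces $\tilde{\q}^{>1}$ and $\tilde{\f}^{>1}$, leaving the $\ell=1$ claim unaddressed. However, the rescue you propose --- that restricting to physically admissible (compactly supported) data rules out the spurious time-dependent $\ell=1$ modes --- does not work. At $\ell=1$, equation (\ref{Inv1}) reduces to the homogeneous wave equation $\left[\nabla_{\a}\nabla^{\a}+\tfrac{8M}{r^3}-\tfrac{6Q^2}{r^4}-\tfrac{2\L}{3}\right]\tilde{\mathcal{K}}^{(\ell=1)}=0$, which admits compactly supported, genuinely time-dependent Cauchy data; yet, as you yourself observe, every LEME solution has $\mathcal{K}^{(\ell=1)}=\tfrac{12M}{r^2}u$ with $u$ independent of $(t,r)$, because the $W^{(\ell=1)}$ contributions cancel identically in (\ref{KF}). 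Such a time-dependent $\tilde{\mathcal{K}}^{(\ell=1)}$ is therefore not realizable by any LEME solution, and part (iii) as stated fails at $\ell=1$ unless one adds the hypothesis that $r^2\tilde{\mathcal{K}}^{(\ell=1)}$ is constant (equivalently, restricts (iii) to $\ell>1$, or supplements (\ref{Inv1})--(\ref{Inv2}) at $\ell=1$ with the constraint $\td_a\bigl(r^2\mathcal{K}^{(\ell=1)}\bigr)=0$, which is the content of (\ref{leme-l=1})). Your instinct to isolate this step was correct, but the resolution is a restriction on the admissible $\tilde{\mathcal{K}}^{(\ell=1)}$, not a decay condition on the data.
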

\begin{proof}
\mbox{}
\begin{itemize}
\item[(i)] Expand all fields in the orthonormal 
 basis of spherical harmonics $S_{(\ell,m)}$, e.g., $\f = \sum_{(\ell,m)} \f^{(\ell,m)} S_{(\ell,m)}$ 
(then $\f ^{(\ell=1)}= \sum_{m=1}^3 \f^{(\ell=1,m)} S_{(\ell=1,m)}$ and $\f^{>1} = \sum_{(\ell>1,m)} \f^{(\ell,m)} 
S_{(\ell,m)}$) 
and similarly for $\q$, $\Phi$ and $W$. 
Recall that $\f, \mathcal{Q},  W \in L^2(S^2)_{>0}$ whereas $\Phi \in L^2(S^2)_{>1}$.
From equations (\ref{W1}), (\ref{A1gral}), (\ref{wtotal})  and (\ref{F}) follows that 
\begin{equation}
\f ^{( \ell=1)} = -8\frac{Q}{r^4} A^{( \ell=1)}.
\end{equation}
Thus, from $\f ^{( \ell=1)}$ we obtain $A^{( \ell=1)}$ which, inserted in (\ref{pert1}) gives the $\ell=1$ piece of 
the electromagnetic field perturbation and inserted in  (\ref{eqa1}) gives $S$. 
Using $A^{( \ell=1)}$ and $S$ in  (\ref{za}) gives the $\ell=1$ field $\mathcal{Z}$. In any gauge with $h_r=0$, 
$\mathcal{Z}=  \p_r (h_t^{(\ell=1)}/r^2)$, this implies that 
the $\ell=1$ piece of the metric perturbation  can be obtained by integration  (see equation (\ref{ha-1-g})). \\

To reconstruct the $\ell>1$ pieces of the fields (\ref{pert1}) we proceed as follows: 
 from (\ref{F}), $\f^{>1}= 4Q \j2 W^{>1}/r^3$, therefore  $\f_{(\ell,m)} = -(4Q/r^3) \ell (\ell+1) 
W_{(\ell,m)}$ for $\ell>1$, i.e., $\f^{>1}$ gives $W^{>1}$. Combining 
the $\ell>1$ projections of (\ref{F}) and (\ref{Q}) gives (see (\ref{KF}))
\begin{equation} \label{k>1}
\mathcal{K}^{>1} = \frac{2 r^6 \q^{>1}}{Q^2-Mr} + r^2 \f^{>1} = \j2 (\j2+2) \Phi,
\end{equation}
from where $\Phi$ can be obtained since  the operator $\j2(\j2+2)$ is invertible in $L^2(S^2)_{\ell>1}$, acting 
as $(\ell+2)(\ell+1)\ell (\ell-1)$ on any $\ell>1$ subspace of $L^2(S^2)$ (i.e., 
 $(\j2(\j2+2) \Phi)_{(\ell,m)} = (\ell+2)(\ell+1)\ell (\ell-1) \Phi_{(\ell,m)}$.)\\
Once we have  $W^{>1}$ and $\Phi$  the $\ell>1$ electromagnetic perturbation is obtained by inserting $A^{>1}= r 
W^{>1}$ in (\ref{pert1}) and the Regge-Wheeler representative of the $\ell>1$ metric perturbation is obtained 
inserting 
$Z=r^2\Phi$ in 
equations (\ref{ha}) and (\ref{pertrw}).\\

\item[(ii)] 
From (\ref{k>1}), using $[\j2,\nabla_{\a}]=0$ and the $\ell>1$
equations (\ref{master})-(\ref{master2}) we find that the projections $\q^{>1}$ and 
$\f^{>1}$ satisfy  the system of equations (\ref{Inv1})-(\ref{Inv2}):
\begin{align} 
&\left[ \nabla_{\a} \nabla^{\a}    + \left( \frac{8M}{r^3}- \frac{6Q^2}{r^4}- \frac{2 \L}{3} \right) \right] \mathcal{K}^{>1}
= (\j2+2) \f^{>1}\\ 
& \left[ \nabla_{\a} \nabla^{\a}  +  \left( \frac{2M}{r^3}- \frac{6Q^2}{r^4}- \frac{2 \L}{3} \right) \right] \left(\frac{r^3}{4Q} \f^{>1} \right)  
= - \frac{Q}{r^3} \mathcal{K}^{>1}
\end{align}
The $\ell=1$ piece of of $\mathcal{K}$,
\begin{equation}
\mathcal{K}^{(\ell=1)} = \left( \frac{2r^6 \q^{(\ell=1)}}{Q^2-Mr}+r^2 \f^{(\ell=1)} \right) = \frac{12M}{r^2} \sqrt{ \frac{4 \pi}{3}}  \sum_{m=\pm1,0} j^{(m)} S_{(\ell=1,m)},
\end{equation}
together with that of $\f$ 
\begin{equation} \label{F1}
\f^{(\ell=1)} =\frac{8Q^2 }{r^5} \sqrt{ \frac{4 \pi}{3}}  \sum_{m=\pm1,0} j^{(m)} S_{(\ell=1,m)} -
 \frac{8Q}{r^3}W^{(\ell=1)},
\end{equation}
also verify  (\ref{Inv1})-(\ref{Inv2}). This can  be checked
 using equation (\ref{master2-1}) and the fact that the wave operator on the left of equation 
(\ref{Inv1}) gives zero when acting on $S_{(\ell=1,m)}/r^2$.   Thus, equations (\ref{Inv1})-(\ref{Inv2}) follow.
\item[(iii)] Define 
 \begin{equation} \label{inverse1}
\tilde  \Phi = [\j2 (\j2+2)]^{-1} \; \tilde{\mathcal{K}}^{>1}, \;\;\; \;  \tilde W^{>1} = \frac{r^3}{4Q} [\j2]^{-1} \tilde{\f}^{>1}.
\end{equation}
Equations  (\ref{Inv1}) and (\ref{Inv2}) imply that the fields (\ref{inverse1}) satisfy the system of equations (\ref{master}) and 
(\ref{master2}) and therefore,  $\tilde h_a^{>1} =\tilde \e_a{}^b \td_b(r^2 
 \tilde \Phi)$ 
and $\tilde A^{>1}=r  \tilde W^{>1}$ satisfy the $\ell \geq 1$ 
 LEME  (see the paragraph below equation (\ref{master2})). In view of (\ref{F})-(\ref{KF}) and (\ref{k>1}), 
 the associated   $\ell \geq 1$  solution class $([\tilde h_{\a \b}], F_{\a \b})$, and in particular its RW representative,  
has $\delta (\tfrac{1}{48} C^*_{\alpha \beta \gamma \delta} C^{\alpha \beta \gamma \delta})= \tilde{\q}^{>1}$ and 
$\d (F^*_{\alpha \beta} F^{\alpha \beta})= \tilde{\f}^{>1}$
\end{itemize}
\end{proof}

In the $Q \to 0$ limit equations (\ref{Inv1}) and (\ref{Inv2}) decouple. The first one gives the four dimensional Regge-Wheeler equation 
for $\q$ used in \cite{Dotti:2013uxa}  and \cite{Dotti:2016cqy} to prove the nonmodal linear stability of 
the Schwarzschild dS black hole and  the second one gives 
the Fackerrel-Ipser equation for 
a test Maxwell field on a Schwarzschild (A)dS black hole \cite{Fackerell:1972hg} \cite{Jezierski:2015lwa}.

\subsection{Pointwise boundedness of $\mathcal{Q}$ and $\mathcal{F}$ for $\Lambda \geq 0$} \label{pb}

The standard way of solving the $\ell>1$ LEME (\ref{master})-(\ref{master2}) is  by projecting this system onto 
the $\ell$ subspaces and 
then decoupling the resulting pair of fixed $\ell$ equations by introducing two Regge-Wheeler fields 
 \cite{Kodama:2003kk} \cite{Zerilli:1974ai}.
 This is 
equivalent to introducing the operator
\begin{equation}
\Xi = \sqrt{9M^2-4Q^2(\j2+2)}, 
\end{equation}
which is well defined and positive definite in $L^2(S^2)_{>0}$, as it acts  on $L^2(S^2)_{\ell}$ as multiplication 
times $$\sqrt{9M^2+4Q^2(\ell+2)(\ell-1)},$$ and  two fields $\Phi_{n}, n=1,2$ in terms of which 
\begin{align} \label{wfi}
W^{>1} &= \frac{(3M+ \Xi)}{r} \;  \Phi_1 + \frac{Q}{r} (\j2+2) \; \Phi_2\\
\Phi &= \frac{4Q}{r} \Phi_1 + \frac{(3M+\Xi)}{r}\;  \Phi_2. \label{Pfi}
\end{align}
This makes the system (\ref{master})-(\ref{master2}) equivalent to the Regge-Wheeler equations, 
first derived in \cite{Zerilli:1974ai},  
\begin{equation} \label{RWE}
(\p_t^2 - \p_{r^*}^2 + f {U_n}) \Phi_n =0, \;\; \Phi_n \in L^2(S^2)_{>1}, \; n=1,2
\end{equation}
where $r^*$  is a  {\em tortoise} radial coordinate 
and
\begin{equation} \label{rwp}
U_n= -\frac{\mathbf{J}^2}{r^2} + \frac{4Q^2}{r^4} - \frac{3M+(-1)^n \Xi}{r^3}.
\end{equation}
In terms of these fields, the $\ell>1$ piece of $\q$ and $\f$ are
\begin{align} \label{q>1}
\q^{>1} &= \frac{2(Q^2-Mr)}{r^6} \left(\frac{Q}{r} \left[(\j2+2)-\frac{3M+\Xi}{r}\right] \j2 \Phi_1 + \left[ \frac{3M+\Xi}{4r}-\frac{Q^2}{r^2}\right] \j2 
(\j2+2) \Phi_2\right)\\ \label{f>1} 
\f^{>1} &= \frac{4Q}{r^4} (3M +\Xi) \j2 \Phi_1 + \frac{4Q^2}{r^4} (\j2+2)\j2 \Phi_2
\end{align}

Up to this point, the considerations in this paper were insensitive to the value of the cosmological constant: 
odd perturbations can always be treated using the gauge invariant potentials $\Phi$ and $W$ and  constants 
$j^{(m)}$, and Theorem 1 holds irrespective of the value of $\Lambda$. 
In the rest of this Section, however, we will  consider  the {\em evolution of 
initial data} for the LEME, for which we  need to make a distinction between the cases $\Lambda <0$ and $\Lambda \geq 0$ 
due some key differences  in their causal structure. \\
In the asymptotically AdS case $\Lambda<0$, $f$ in (\ref{f}) has two positive roots $0 < r_i < r_h$ (we will restrict  for the moment 
to the non extremal case $r_i \neq r_h$) and the hypersurfaces 
they define bound three regions: I ($0<r<r_i$), II ($r_i<r<r_h$) and III ($r_h<r$). Isometric copies of these regions are obtained 
by ``Kruskalizing'' around the simple roots $r_i, r_h$ of $r^2 f$. This gives the maximal analytic extension 
depicted in figure \ref{adsfig}, which extends infinitely in the vertical direction. Note that regions I and III, where $f>0$,  are static whereas $f<0$ 
in region II, which is therefore non static. Note also  that the union of regions 
II, II', III and III'   fails to be globally hyperbolic due to the 
timelike character of the future and past null infinities $\mathcal{I}^{\pm}$. This  is the peculiar aspect of  asymptotically anti de Sitter spaces
that differentiates it from asymptotically de Sitter or flat spaces. 
In the asymptotically AdS case the dynamics of wave-like equations  requires a prescription of boundary conditions at the conformal 
timelike boundary  $\mathcal{I}^- \cup \mathcal{I}^+$, which corresponds to $r=\infty$. Different boundary conditions lead to different dynamics, 
including unstable and stable ones \cite{bernardo}. For this reason, from now on, we restrict to the cases $\Lambda \geq 0$, 
for which the dynamics is unique and, as we will show, stable.\\

\begin{figure}[htb]
    \includegraphics[width=.3\textwidth]{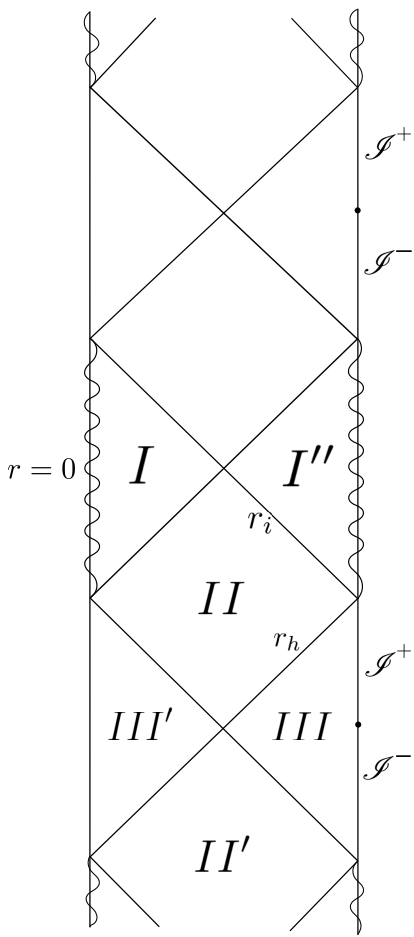}
  \caption{\label{adsfig} The Carter-Penrose diagram of (part of) the maximal analytic extension of the \rn AdS black hole. 
The union of II, II', III and III'   fails to be globally hyperbolic due to the 
timelike character of $\mathcal{I}^- \cup \mathcal{I}^+$,  $\mathscr{I}^- \cup \mathscr{I}^+$ which is peculiar to asymptotically anti de Sitter spaces.}
\end{figure}

For $\Lambda=0$ and $|Q| <M$, $f$ has  again two positive roots $r_i<r_h$ that correspond  respectively 
to the Cauchy and black hole 
horizons.  In this case  $f= (r-r_i)(r-r_h)/r^2$ with  $Q^2=r_i r_h$ and $M=\tfrac{1}{2}
(r_i+r_h)$. 
The outer static region, region III in Figure \ref{rnfig},  corresponds to $r>r_h$ whereas the inner static 
region $I$ is the one defined by
$0<r<r_i$; the singularity at $r=0$ is covered by these two horizons. Kruskalizing at $r_i$ and $r_h$ 
we get further copies of these regions resulting  the diagram in the figure, which  extends 
infinitely in the vertical direction. 
The union of II, II', III and III' 
is globally hyperbolic, I and I' being extensions beyond the Cauchy horizon at $r=r_i$, which is the future boundary 
of the maximum Cauchy development of initial data given at a complete spacelike hypersurface extending  from spacelike infinity 
in region III' to spacelike infinity 
in region III. In the extreme case $|Q|=M$, $r_i=r_h$ and region II collapses. For $|Q|>M$ the spacetime 
is not a black hole but an (unstable, see \cite{Dotti:2006gc}  and \cite{Dotti:2010uc}) naked singularity. 

\begin{figure}[htb]
    \includegraphics[width=.3\textwidth]{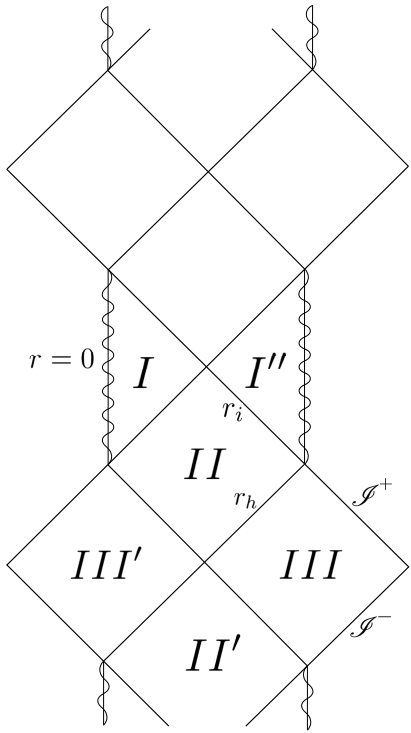}
  \caption{\label{rnfig} The Carter-Penrose diagram of (part of) the maximal analytic extension of the $|Q|<M$ 
\rn  black hole. 
The union of II, II', III and III'   is globally hyperbolic, its boundary at $r_i$ is a Cauchy horizon.}
\end{figure}

 For $\Lambda>0$ we focus 
on the non extremal cases, for which  $f$ has three simple positive roots $0<r_i<r_h<r_c$ which correspond to  the inner, black hole and cosmological 
horizons respectively, and  a fourth root at $r=-(r_i+r_h+r_c)$:
\begin{equation} \label{fr}
f = - \frac{(r-r_i)(r-r_h)(r-r_c)(r+r_i+r_h+r_c)}{r^2(r_i^2+r_h^2+r_c^2+r_i r_h+r_i r_c+r_hr_c)}, 
\end{equation}
$Q^2, M$ and $\Lambda$, as well as the  relations among them 
can be found  in terms of $r_i, r_h$ and $r_c$  by comparison  of (\ref{fr}) with  (\ref{f}).
 As before, regions separated by the horizons are numbered in increasing number for larger $r$ values. 
Since we can Kruskalize around all three horizons and large $r=$ constant hypersurfaces are spacelike, 
we get the  diagram in figure \ref{rndsfig}, which extends infinitely in both directions. 
There are a number of extremal cases corresponding to $r_i=r_h$, $r_h=r_c$, etc, the Carter-Penrose diagrams
for these cases can be found in \cite{lake}.\\

\begin{figure}[htb]
    \includegraphics[width=.6\textwidth]{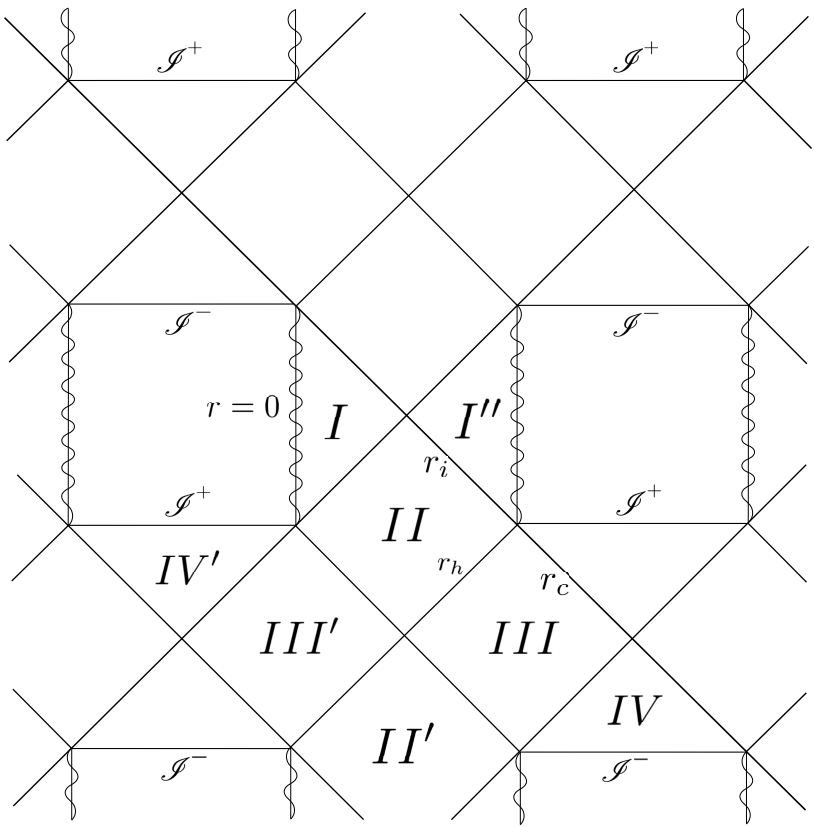}
  \caption{\label{rndsfig} The Carter-Penrose diagram of (part of) the maximal analytic extension of a non extremal 
(three different horizons) 
\rn  de Sitter black hole. }
\end{figure}

In what follows,  we will prove the stability of the outer static region III of 
$\Lambda \geq 0$ \rn black holes. To this purpose, we will consider the 
union of regions II, II', III and III', which is  globally hyperbolic,  and  study the evolution of perturbations from data 
on a Cauchy surface. Any Cauchy surface has two ends, one at each copy of spacelike infinity (if $\Lambda=0$) 
or the $r_c$ bifurcation sphere (if $\Lambda>0$) in regions III and III'. 
As explained above,  we will restrict our considerations to perturbations with initial data compactly supported away from these  ends.  
Relevant perturbations can be more general, as long as they  preserve the asymptotically flat (AdS) character of the background, however,  
for the seek of simplicity and to allow a unified treatment of the $\Lambda=0$ and $\Lambda>0$ case 
we will assume compact support, as  in  \cite{Kay:1987ax}.\\

Following \cite{Kay:1987ax} we  write (\ref{RWE})-(\ref{rwp}) as 
\begin{equation} \label{RWE2}
\p_t^2 \Phi_n+A_n \Phi_n=0
\end{equation}
where
\begin{equation}
A_n=- \p_{r^*}^2 + V_1 + V_2(-\mathbf{J}^2) + {}_n{V_3}\;\Xi \label{Aeq}
\end{equation}
and
\begin{equation} \label{v123}
V_1=f\Big(\frac{4Q^2}{r^4} - \frac{3M}{r^3}\Big) \;, V_2=\frac{f}{r^2}  \;,\;  {}_nV_3=(-1)^n \frac{f}{r^3}
\end{equation}
are bounded functions on the outer static region $III$ for $\Lambda \geq 0$. 
A non trivial fact, proved in Section 6.2.1 of \cite{Kodama:2003kk} is that the $A_n, n=1,2$ are positive definite self adjoint operators in 
the space $L^2(\mathbb{R} \times S^2, dr^*\sin(\theta) d\theta d\phi)$  of square integrable functions of region $III$ 
under this particular measure. The proof is based on a particular $S-$deformation (defined in \cite{Kodama:2003kk}) of  the $A_n$'s. 
\\

The proof of the following Theorem is a straightforward adaptation to equation (\ref{RWE}) of Theorem 1  in \cite{Kay:1987ax} 
which is about  the Klein Gordon 
equation 
 on a Schwarzschild background . It  uses   the self adjointness and positive character 
of the $A_n$, $-\j2$ and $\Xi$:

\begin{thm}\label{bound}  Assume $\Phi_n$ is a solution of equation (\ref{RWE}) on the union of regions II, II', III and III' 
of the  extended \rn (figure \ref{rnfig} or \rn   de Sitter (figure \ref{rndsfig}) spacetimes,  
which has compact support on Cauchy surfaces. There exists a constant $C$ that depends on the 
datum of this field at a Cauchy surface, such that $|\Phi_n|<C$ for all points in the outer static region $III$. 
\end{thm}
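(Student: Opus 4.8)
The plan is to adapt the energy–plus–Sobolev argument of \cite{Kay:1987ax}, exploiting that equation (\ref{RWE2}) has the form $\p_t^2\Phi_n + A_n\Phi_n = 0$ with $A_n$ a positive definite self adjoint operator on $L^2(\mathbb{R}\times S^2, dr^*\sin\theta\,d\theta\,d\phi)$, and that $A_n$ commutes with each of $\p_t$, $\j2$ and $\Xi$: the first because the coefficients in (\ref{Aeq})--(\ref{v123}) are $t$ independent, the latter two because $\j2$ acts only on the angular variables (and $\Xi$ is a function of $\j2$) while $A_n$ is built from $\p_{r^*}^2$, $\j2$ and $\Xi$.

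First I would exhibit the conserved energy. For any solution $u$ of (\ref{RWE}) set $\mathcal{E}[u] = \langle \p_t u, \p_t u\rangle + \langle u, A_n u\rangle$; self adjointness and $t$ independence of $A_n$ give $\tfrac{d}{dt}\mathcal{E}[u]=0$. Because $A_n$ is positive definite, $\mathcal{E}[u]$ controls $\|\p_t u\|_{L^2}$ together with $\|\p_{r^*}u\|_{L^2}$ and a weighted $\|u\|_{L^2}$; this last step uses the $S$-deformed, manifestly nonnegative form of $A_n$ of Section 6.2.1 of \cite{Kodama:2003kk}, which rewrites $\langle u, A_n u\rangle$ as a sum of squares. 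Since $\p_t\Phi_n$, $\j2\Phi_n$ and $(\j2)^2\Phi_n$ again solve (\ref{RWE}), the quantities $\mathcal{E}[\p_t\Phi_n]$, $\mathcal{E}[\j2\Phi_n]$, $\mathcal{E}[(\j2)^2\Phi_n]$, etc., are all conserved and finite, being fixed by the Cauchy datum (the extra operator $\Xi$ being a function of $\j2$ is automatically controlled). Combining these yields a uniform-in-$t$ bound on the finite collection of mixed norms $\|\p_{r^*}^i (\j2)^j \Phi_n\|_{L^2}$, $i\in\{0,1\}$, $j\in\{0,1,2\}$, on every static slice of region $III$.

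Next I would convert these $L^2$ bounds into a pointwise bound by a two step Sobolev embedding on $\mathbb{R}\times S^2$. In the noncompact $r^*$ direction I use the one dimensional estimate $\sup_{r^*}|g|^2 \leq \|g\|_{L^2}^2 + \|\p_{r^*}g\|_{L^2}^2$, and on $S^2$ I expand in the real orthonormal harmonics $S_{(\ell,m)}$ and use the addition theorem $\sum_m |S_{(\ell,m)}|^2 = (2\ell+1)/(4\pi)$ together with a Cauchy--Schwarz split that inserts a few powers of $\j2$ so that the remaining $\ell$ sum $\sum_\ell (2\ell+1)\langle\ell\rangle^{-2s}$ converges (which needs $s>1$, hence $\j2$ applied twice suffices). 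Applying this to $\Phi_n$, $\j2\Phi_n$ and $(\j2)^2\Phi_n$ on a fixed slice, the conserved energies of the previous step give $|\Phi_n|\leq C$ at every point of region $III$, with $C$ determined by the initial data.

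The main obstacle is that the static slices $t=\mathrm{const}$ of region $III$ reach the bifurcation sphere at $r^*\to-\infty$, where the Killing field $\p_t$ degenerates, so it is not obvious that the datum induced on such a slice by a compactly supported datum on a Cauchy surface of the globally hyperbolic region (regions II, II', III and III') has finite static energy. This is precisely the point addressed in \cite{Kay:1987ax}: using global hyperbolicity of the larger region, finite propagation speed, and the time translation and reflection isometries that fix the bifurcation sphere, one shows that inside region $III$ the solution coincides with one whose datum on a static slice has compact support in $r^*$ (hence finite energy), and that the resulting bound is uniform in $t$. I would adapt this reduction step by step to (\ref{RWE}), the only new feature being the extra commuting positive operator $\Xi$, which is harmless since it merely contributes the term ${}_nV_3\,\Xi$ to $A_n$ and commutes with every operator used above.
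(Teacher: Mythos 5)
Your strategy is the same as the paper's: this theorem is proved there as a direct adaptation of Theorem~1 of \cite{Kay:1987ax}, via the conserved energy $E=\tfrac12\int((\p_t\Phi_n)^2+\Phi_n A_n\Phi_n)\,dr^*\sin\theta\,d\theta\,d\phi$, the positivity and self-adjointness of $A_n$, $-\j2$ and $\Xi$, a Sobolev inequality on $\mathbb{R}\times S^2$, and the $Z_2$ (III $\leftrightarrow$ III$'$) isometry argument to handle the bifurcation sphere. The only cosmetic difference is the Sobolev step: the paper quotes inequality (5.27) of \cite{dk}, $|G|\leq K(\lVert G\rVert+\lVert\p_{r^*}^2G\rVert+\lVert\j2 G\rVert)$, and then trades $\p_{r^*}^2$ for $A_n$ minus bounded potentials, whereas you rebuild an equivalent estimate by hand from the one-dimensional embedding in $r^*$ and the addition theorem on $S^2$. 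Either version works.

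There is, however, one step in your write-up that does not go through as stated. You claim that positive definiteness of $A_n$ lets $\mathcal{E}[u]$ control $\lVert\p_{r^*}u\rVert_{L^2}$ and a \emph{weighted} $\lVert u\rVert_{L^2}$, and you then list the unweighted $\lVert\p_{r^*}^i(\j2)^j\Phi_n\rVert_{L^2}$ (including $i=j=0$) among the quantities bounded. But $A_n>0$ does not imply a spectral gap, and the $S$-deformed potential degenerates where $f\to0$, so the weight in your weighted norm vanishes at the horizon; the unweighted $\lVert\Phi_n\rVert_{L^2(dr^*\sin\theta\,d\theta\,d\phi)}$, which your own one-dimensional Sobolev estimate requires, is \emph{not} controlled by $\mathcal{E}[\Phi_n]$ alone. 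The standard repair, which is exactly what the paper (following the Appendix of \cite{Kay:1987ax}) does, is to evolve the auxiliary datum $(A_n^{-1/2}\Phi_n^o,A_n^{-1/2}\dot\Phi_n^o)$ and use
\begin{equation*}
\lVert \fn\rVert^2\;\leq\;\lVert\fn\rVert^2+\lVert A_n^{-1/2}\dfn\rVert^2
\;=\;2\,E\!\left(A_n^{-1/2}\Phi_n^o,\,A_n^{-1/2}\dot\Phi_n^o\right),
\end{equation*}
together with the analogous identities with $A_n^{1/2}$, $\j2$ and $\Xi$ inserted, so that every norm appearing in the Sobolev estimate is identified with (twice) a conserved energy of a related configuration. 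With that substitution your argument closes; without it, the passage from energy conservation to the unweighted $L^2$ bounds is a genuine gap.
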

\begin{proof}
The argument in \cite{Kay:1987ax} showing that we can restrict to fields that vanish at the bifurcation sphere together with 
its Kruskal time derivative holds here  case because the $Z_2$ required isometry exchanging III $ \leftrightarrow$ III'  is also available 
in this case. 
This implies that we may restrict our attention  to fields in the outer static region decaying towards the bifurcation sphere as 
detailed in the Appendix in \cite{Kay:1987ax}. \\
 
On a  $t$ slice of region III, define the $L^2$ norm of a real field $G$ as 
\begin{equation}
\lVert G \rVert^2=\langle G|G\rangle=\int_{\mathbb{R} \times S^2}  G^2 \;  dr^* \sin(\theta) d\theta d\phi, \;\; dr^* = \frac{dr}{f}.
\end{equation}
Note that this is {\em not} the volume element  induced from the spacetime metric. The usefulness 
of the above norm lies in 
the Sobolev type inequality, equation (5.27)  in  \cite{dk}, relating it with a point wise boundedness of $G$ on the slice
\begin{equation}
|G(r^*,\theta,\phi)| \leq  K \left( \lVert G \rVert + \lVert \p_{r^*}^2 G \rVert +
\lVert \j2 G \rVert \right) \;, \;\; (r^*,\theta,\phi) \in \mathbb{R} \times S^2 
\end{equation}
where $K$ is a constant. Applying this to  the Regge-Wheeler fields $\Phi_n$ at a fixed time $t$ gives 
\begin{equation}
|\Phi_n(t,r^*,\theta,\phi)| \leq  K \left(\left. \lVert \Phi_n \right|_t \rVert + \left. \lVert \p_{r^*}^2 \Phi_n \right|_t \rVert +
\left. \lVert \j2 \Phi_n \right|_t \rVert \right). \label{sobo2}
\end{equation}
As in the Appendix in \cite{Kay:1987ax}, we will follow the  strategy of proving that the $L^2$ norms 
on the right hand side of (\ref{sobo2}) can be bounded by the energies of related field configurations. Since 
energy is conserved for solutions of (\ref{RWE2}), we get in this way a $t-$independent upper bound of the 
right side of (\ref{sobo2}) and therefore,  a global bound of $|\Phi_n(t,r^*,\theta,\phi)|$ for all $(t,r^*,\theta,\phi)$, 
i.e., of $\Phi_n$ in the outer static region III. \\

The conserved (i.e., $t-$independent) energy associated to equation (\ref{RWE2}) is 
\begin{equation} \label{energy}
E = \frac{1}{2} \int _{\mathbb{R} \times S^2} ((\p_t \Phi_n)^2 + \Phi_n A_n \Phi_n )\;  dr^* \sin(\theta) d\theta d\phi.
\end{equation}
Since $E$ does not depend on $t$, we may (and will) regard it as a functional on the initial datum: $E = E(\Phi^o_n, \dot \Phi^o_n)$, where $\Phi^o_n
 = \left.  \Phi_n \right|_{t_o}$ and $ \dot \Phi^o_n=\left. (\p_t \Phi_n) \right|_{t_o}$:
\begin{equation}
E(\Phi^o_n, \dot \Phi^o_n)= \frac{1}{2} \int _{\mathbb{R} \times S^2} ((\dot \Phi^o_n)^2+ \Phi^o_n A_n \Phi^o_n
 )\;  dr^* \sin(\theta) d\theta d\phi.
\end{equation}
From (\ref{Aeq})  
\begin{equation}
\lVert \p_{r^*}^2 \fn \rVert \leq \lVert A_n \fn \rVert + \lVert V_1 \rVert_\infty \lVert \fn \rVert +
\lVert V_2 \rVert_\infty \lVert \j2 \fn \rVert + \lVert {}_nV_3 \rVert_\infty \lVert \Xi \fn \rVert,
\end{equation}
where $\lVert V_1 \rVert_\infty$ is the least upper bound  of  
 $V_1$  on the outer region III,  and similarly for the other terms. Combining this with 
(\ref{sobo2}) gives 
\begin{equation}
|\Phi_n(t,r^*,\theta,\phi)| \leq  K' \left(\left. \lVert \Phi_n \right|_t \rVert +  \lVert A_n \fn \rVert + 
\left. \lVert \j2\Phi_n \right|_t \rVert + 
\lVert \Xi \fn \rVert
\right). \label{sobo3}
\end{equation}
We now use the fact that applying to a Cauchy datum  $(\Phi_n^o, \dot \Phi_n^o)$ an operator that is a function of $\j2$ or 
$A_n$ commutes with time evolution \cite{Kay:1987ax}, and also use the positive definiteness of the $A_n$ to define 
$A_n^{\pm 1/2}$ by means of  the spectral theorem. This allows us   to estimate 
each term on the right hand side of (\ref{sobo3}) with the energy of field configurations related to the one with initial datum 
$(\Phi_n^o, \dot \Phi_n^o)$ (note that the first three equations below are taken verbatim from the Appendix in \cite{Kay:1987ax}),
\begin{align}
& \lVert \fn \rVert^2  \leq \lVert \left. \Phi_n \right|_t\rVert^2 + \lVert A_n^{-\frac{1}{2}}\left. \dot 
\Phi_n \right|_t \rVert^2 =
2 \;E\left(A_n^{-\frac{1}{2}} \Phi_n^o,A_n^{-\frac{1}{2}} \dot \Phi_n^o\right),\\
&   \lVert A_n \fn \rVert^2 \leq   \lVert A_n \fn \rVert^2 + \lVert A_n^{\frac{1}{2}} \dfn\rVert^2 = 2 \; E\left(A_n^{\frac{1}{2}} \Phi_n^o,A_n^{\frac{1}{2}} \dot \Phi_n^o\right),\\
&  \lVert \j2 \fn  \rVert^2 \leq   \lVert \j2 \fn \rVert^2 +  \lVert A_n^{-\frac{1}{2}} \j2 \dfn\rVert^2 =
 2 \; E\left(A_n^{-\frac{1}{2}} \j2 \Phi_n^o,A_n^{-\frac{1}{2}} \j2 \dot \Phi_n^o\right),\\
& \lVert \Xi \fn  \rVert^2 \leq   \lVert \Xi \fn \rVert^2 +  \lVert A_n^{-\frac{1}{2}} \Xi \dfn\rVert^2 =
 2 \; E\left(A_n^{-\frac{1}{2}} \Xi \Phi_n^o,A_n^{-\frac{1}{2}} \Xi \dot \Phi_n^o\right),
\end{align}
and to replace the right hand side of (\ref{sobo3}) with a time independent constant made out of 
the datum $(\Phi_n^o, \dot \Phi_n^o)$, as desired.
\end{proof}

\begin{cor}
Let $\f$ and $\q$ be the fields (\ref{fopf}) associated to a solution of the LEME.  
Under the assumptions of the Theorem, 
in the outer static region III of a $\Lambda \geq 0$ \rn black hole  
\begin{equation} \label{invbounds}
\f < \frac{\f_o}{r^4}, \;\; \q < \frac{\q_o}{r^6},
\end{equation}
where $\f_o$ and $\q_o$ are constants that depend on the Cauchy datum $(j^{(m)},A_h^{(m)},\dot A_h^{(m)},\Phi^o_n, \dot \Phi^o_n)$ of 
the perturbation.
\end{cor}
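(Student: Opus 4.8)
The plan is to control $\f$ and $\q$ on their $\ell=1$ and $\ell>1$ projections separately, feeding the pointwise bound of Theorem \ref{bound} into the closed expressions (\ref{f>1})--(\ref{q>1}) for the $\ell>1$ part and into (\ref{F1}) together with the $\ell=1$ projection of (\ref{Q}) for the $\ell=1$ part, and then reading off the powers $r^{-4}$ and $r^{-6}$ from the explicit rational prefactors. Since $\f,\q\in L^2(S^2)_{>0}$ there is no $\ell=0$ contribution, so $\f=\f^{(\ell=1)}+\f^{>1}$ and $\q=\q^{(\ell=1)}+\q^{>1}$, and it suffices to bound each summand in absolute value.

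For the $\ell>1$ sector I would start from (\ref{f>1}) and (\ref{q>1}), which display $\f^{>1}$ and $\q^{>1}$ as finite sums of a rational function of $r$ times an angular operator built from $\j2$ and $\Xi=\sqrt{9M^2-4Q^2(\j2+2)}$ applied to $\Phi_1$ or $\Phi_2$. The crucial observation is that $\j2$, $(\j2+2)$ and $\Xi$ are all functions of $\j2$: they commute with the evolution operator $A_n$ of (\ref{Aeq}) and act only on the $S^2$ factor, hence preserve both compact support in $r^*$ and membership in $L^2(S^2)_{>1}$. Therefore each of $\j2\Phi_n$, $\Xi\j2\Phi_n$, $(\j2+2)\j2\Phi_n$, and the like, is again a compactly supported solution of (\ref{RWE}), and Theorem \ref{bound} supplies a single $t$-independent constant bounding its modulus throughout region $III$. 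Collecting these bounds, (\ref{f>1}) gives $|\f^{>1}|\le C\,r^{-4}$ since every term there carries an explicit factor $r^{-4}$; in (\ref{q>1}) the parenthetical factor multiplying $2(Q^2-Mr)/r^6$ is $\le C\,r^{-1}$ for $r\ge r_h$ (its leading large-$r$ terms are $O(r^{-1})$ and the remainder $O(r^{-2})$), while $|2(Q^2-Mr)|/r^6\le 2M/r^5+2Q^2/r^6$, so $|\q^{>1}|\le C\,r^{-6}$.

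For the $\ell=1$ sector the $j^{(m)}$ are constants fixed by the datum and $\Phi$ has no $\ell=1$ part, so (\ref{F1}) and the $\ell=1$ projection of (\ref{Q}) reduce, using $\j2 W^{(\ell=1)}=-2W^{(\ell=1)}$, to
\begin{align*}
\f^{(\ell=1)} &= \frac{8Q^2}{r^5}\sqrt{\tfrac{4\pi}{3}}\sum_m j^{(m)}S_{(\ell=1,m)} - \frac{8Q}{r^3}W^{(\ell=1)}, \\
\q^{(\ell=1)} &= \frac{2(Q^2-Mr)}{r^6}\left[\frac{3Mr-2Q^2}{r^3}\sqrt{\tfrac{4\pi}{3}}\sum_m j^{(m)}S_{(\ell=1,m)} + \frac{2Q}{r}W^{(\ell=1)}\right].
\end{align*}
Hence only $W^{(\ell=1)}=A^{(\ell=1)}_h/r$ remains to be bounded. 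As $A^{(\ell=1)}_h$ is a finite combination of the three functions $A^{(m)}_h(t,r)$ solving the $1+1$ equation (\ref{rw1h}) with potential $V=f(2r^{-2}+4Q^2r^{-4})$ --- precisely the $\ell=1$ specialization of the $n=1$ Regge-Wheeler equation, since $\Xi=3M$ on $\ell=1$ --- and since $V>0$ throughout region $III$ makes $-\p_{r^*}^2+V$ manifestly positive, the energy-plus-Sobolev argument of Theorem \ref{bound} applies to each $A^{(m)}_h$ and gives $|A^{(m)}_h|<C$, whence $|W^{(\ell=1)}|\le C\,r^{-1}$. Substituting this and the fixed $j^{(m)}$ into the two displayed formulas and using $r\ge r_h$ to absorb the extra inverse powers of $r$ yields $|\f^{(\ell=1)}|\le C\,r^{-4}$ and $|\q^{(\ell=1)}|\le C\,r^{-6}$ (in fact $\q^{(\ell=1)}=O(r^{-7})$).

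Adding the two sectors produces the asserted bounds, with $\f_o$ and $\q_o$ assembled from the conserved energies of the operator-dressed data built on $(\Phi^o_n,\dot\Phi^o_n)$ and from the moduli of $j^{(m)}$, $A^{(m)}_h$ and $\dot A^{(m)}_h$. The main obstacle I anticipate is the careful verification that every angular-operator-dressed field entering (\ref{f>1})--(\ref{q>1}) genuinely inherits the hypotheses of Theorem \ref{bound}, namely compact support on Cauchy surfaces and the correct $L^2(S^2)$ parity subspace, so that its energy is finite and the bound is legitimate; this must be combined with checking that the rational prefactors are uniformly controlled over all of region $III$, which is automatic when $\Lambda>0$ since $r\in[r_h,r_c]$, but for $\Lambda=0$ relies on the large-$r$ power counting indicated above.
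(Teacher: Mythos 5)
Your proposal is correct and follows essentially the same route as the paper: bound the operator-dressed fields $(3M+\Xi)\j2\Phi_1$, $(\j2+2)\j2\Phi_2$, etc., via Theorem \ref{bound} (they again solve (\ref{RWE2}) since $\j2$ and $\Xi$ commute with $A_n$), read off the $r^{-4}$ and $r^{-6}$ prefactors in (\ref{f>1})--(\ref{q>1}), and separately bound $W^{(\ell=1)}=A_h^{(\ell=1)}/r$ from the $1{+}1$ equation (\ref{rw1h}). The only cosmetic difference is that for the $\ell=1$ piece the paper simply cites Wald's boundedness theorem for the positive-potential $1{+}1$ wave equation rather than re-running the energy-plus-Sobolev machinery, but the underlying argument is the same.
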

\begin{proof}
We use that the fields $(\j2+2)\j2 \Phi_n, \Xi (\j2+2)\j2 \Phi_2$ and $\Xi \j2 \Phi$ that appear in (\ref{q>1}) and (\ref{f>1}) all satisfy 
equation (\ref{RWE2}) and so, according to the Theorem,  are bounded by a constant. 
This implies that $\f^{>1} <\f_o^{>1}/r^4$, where the constant $\f_o^{>1}$ depends on the initial $\ell>1$ perturbation 
data, and similarly $\q < \q_o^{>1}/r^6$.\\
 To see that the $\ell=1$ pieces do not spoil these  bounds we use (\ref{F1}) and the $\ell=1$ piece  
 of (\ref{Q}):
\begin{equation} \label{Q1}
\mathcal{Q}^{(\ell=1)} = \frac{2(Q^2-Mr)}{r^6} \left[  \frac{(3Mr-2Q^2)}{r^3} \sqrt{ \frac{4 \pi}{3}}  \sum_{m=\pm1,0} j^{(m)} S_{(\ell=1,m)} + \frac{2Q}{r}  W^{(\ell=1)}  \right].
\end{equation}
Each harmonic component $A_h^{(m)}$of $A^{(\ell=1)}_h=r W^{(\ell=1)}$ 
(equation (\ref{gshe})), satisfies the 1+1 wave equation (\ref{rw1h}), which 
is of the form of equation (1) in \cite{wald} with $V$ satisfying the hypothesis used in that paper, therefore (see 
equation (19) 
in the erratum o
\cite{wald}), the $A_h^{(m)}, m=1,2,3$ are bounded, for all $t$ and $r^*$,  by constants 
that depend on the initial data 
$(A_h^{(m)},\dot A_h^{(m)})$ for these fields. 
 This gives $ W^{(\ell=1)} <$ constant$/r$ which, in view of (\ref{F1}) and (\ref{Q1}) implies 
$ \f^{(\ell=1)} <$ constant$/r^4$ and $\q^{(\ell=1)} <$ constant$/r^7$, which  is consistent with (\ref{invbounds}).
\end{proof}

Theorem 1.i together with the above Corollary prove our notion of non-modal linear stability for the outer static 
region of $\Lambda \geq 0$ \rn black holes.

\section{Cosmic censorship and related instabilities}\label{cs}

 The two isometric 
copies of the region $0<r<r_i$ attached to the future of $r_i$ are one among infinitely many different possible extensions of the spacetime 
beyond $r_i$ (although  the only {\em analytic} one). For $\Lambda \geq 0$, this extension spoils 
the global hyperbolicity of the union of regions II, II', III and III' 
 by introducing causal curves that end in the past at the  $r=0$ singularity. Regions I and I' are beyond the maximal 
Cauchy development of a spacelike surface extending from spacelike infinity (bifurcation sphere at $r_c$) in III' 
to spacelike infinity (bifurcation sphere at $r_c$) in III in the $\Lambda=0$ ($\Lambda>0$) case.  
This  is a complete  Cauchy surface if $\Lambda=0$, and 
the possibility of  smoothly  extending the maximal future development of this surface beyond its Cauchy horizon is a 
 rather disturbing feature of General Relativity, 
considered to be {\em non generic}, in a sense
 yet to be made precise, and referred to as the {\em strong cosmic censorship conjecture}, 
first 
proposed by Penrose almost fifty years ago \cite{penrose}. The original argument given by Penrose 
for the $\Lambda=0$ charged black hole,  is that a small amount of radiation 
 originating outside the black hole and 
coming into the non static region II ($r_i <r<r_h$)  
 is gravitationally blueshifted
as it propagates inwards parallel to the Cauchy horizon, in a way such that the 
energy flux measured by an observer in free fall towards (the right copy, see  figures \ref{rnfig} and \ref{rndsfig}) 
 of region I ($0<r<r_i$)   
diverges as $r \to {r_i}^+$.   This idea has proved to hold true 
for electro-gravitational perturbations at the linear level in \cite{chanhartle}, where it was shown that for a radially free falling   
observer with 4-velocity $u^a$ 
the $(\ell,m)$ piece of 
\begin{equation} \label{hartle}
\frac{d \Phi_n}{d \tau} = u^a \nabla_a \Phi_n
\end{equation}
(and therefore the complete field) 
diverges for $n=1,2$ as the observer approaches the Cauchy horizon. Although for  some time it was 
thought that a positive cosmological constant introduces a competition of red and blueshifts effects that 
 prevents this divergence in the case of nearly extremal ($r_i \lesssim r_h$) black holes 
\cite{Chambers:1997ef}, it was later proved in \cite{Brady:1998au} that if we take into account 
the  contributions of scattered outgoing modes the divergence occurs for any positive value 
of $\Lambda$ allowing for a three horizon structure. 
Further evidence of the instability of the Cauchy horizon
are   \cite{Poisson:1990eh}  and related works,  
as well as  more recent models including a  scalar field (to avoid Birkhoff's theorem), see \cite{Burko:2002fv}  and \cite{Maeda:2005yd}. \\
The Regge-Wheeler-Zerilli potentials $\Phi_n$ in equation (\ref{hartle}), whose derivative with 
respect to proper time is shown to diverge 
at the Cauchy horizon in \cite{chanhartle} for $\Lambda=0$ and for $\Lambda>0$ in \cite{Brady:1998au} are,
of course, non observable, as they are potentials for the metric and electromagnetic field perturbations, although the 
square of (\ref{hartle}) contributes to the flux of energy of the perturbation. 
However, the $\Phi_n$ enter the harmonic expansion of the $\ell>1$ pieces of $\mathcal{Q}$ and $\mathcal{F}$,
 and the implications   of the combined set of   equations (\ref{q>1}), (\ref{f>1}) and (\ref{hartle}) are immediate: 
 for a radially infalling  observer 
crossing $r_i$,  
the rate 
$dr/d\tau$ is clearly nonzero and finite. For such an observer, $d/d\tau= u^c \nabla_c$ commutes with 
the angular operators $\j2$ and $\Xi$.  
This implies that both $d \mathcal{F}_{>1}/d \tau$  and $d \mathcal{Q}_{>1}/d\tau$ will diverge 
along this geodesic as $r\to r_i^+$ (and so will diverge $d \mathcal{F}/d \tau$  and $d \mathcal{Q}/d\tau$), 
suggesting that the Cauchy horizon is replaced with a curvature singularity. 
Of course, this statement needs to be taken with care since
as soon as $r_i$ is approached and these quantities  start to grow, the linearized 
equations become useless an one needs to study the evolution of the perturbation using other techniques; 
 but in any case the divergence 
at $r_i$ of the linear fields 
 is a clear indication of strong cosmic censorship. \\

The extreme case $r_i=r_h$ has been less studied, although  it was recently shown that for $\Lambda=0$ (case in which the extreme black 
holes corresponds to $|Q|=M=r_i=r_h$) the transverse derivative $\p \Phi_n / \p r$ 
 in coordinates $(v=t+r^*,r,\theta,\phi)$ (again, for a fixed $\ell$ component) diverges at $r=r_i$ as $v \to \infty$ along 
the horizon null generators 
\cite{Lucietti:2012xr}. The divergence follows from  a set  quantities  that are shown from   (\ref{RWE}) 
to be conserved along the $r_i=r_h$ horizon generators. These are analogous 
to a similar   set of conserved quantities  for the massless scalar wave equation  found in 
\cite{Aretakis:2012ei}, the conservation of which was  shown in \cite{Bizon:2012we} to follow from a combination 
of  Newman Penrose conserved quantities at null infinity 
\cite{newmanpenrose} and a conformal discrete isometry that exchanges the degenerate  horizon and null infinity, 
isometry discovered in \cite{couch} and  used in  \cite{Dain:2012qw} 
to explain the symmetry of the effective potential and the consistency of the pointwise bounds for a massles scalar field 
in the extreme case.\\
 Perturbations of the extreme $\Lambda=0$ case were studied non linearly in a recent paper 
by Reiris \cite{Reiris:2014xca}  where it was shown that small electro-vacuum perturbations of initial data of 
extreme Reissner-Nordstr\"om black holes cannot decay in time into an extreme Kerr-Newman black hole. The evidence 
in \cite{Reiris:2014xca}  is that these non-stationary solutions of the Einstein-Maxwell equations 
 will settle into a sub-extremal black hole of the Kerr-Newman family. \\

As a final comment we mention that, besides the strong evidence supporting the idea that  a slightly perturbed 
Reissner-Nordstr\"om black hole will develop a curvature singularity that cuts off the innermost region I ($0 <r<r_i$), 
this non-unique extension beyond $r_i$ is by itself linearly unstable under electro-gravitational perturbations 
\cite{Dotti:2010uc}. The instability, confined to this region, belongs to the even sector of the linear perturbations.

\section{Discussion}

We  proved that the odd sector of Einstein-Maxwell perturbations around a \rn (A)dS black hole shares  
with  the uncharged Schwarzschild black hole the  property that there are 
 physically meaningful, gauge invariant scalar fields $\q$ and $\f$ encoding the same information 
as a gauge class of a metric perturbation and  satisfying a system of four dimensional wave equations which are 
entirely equivalent to the linearized Einstein-Maxwell equations. 
For uncharged black holes $Q=0$ the system of equations  decouple, leaving 
the equation found in \cite{Dotti:2013uxa}   \cite{Dotti:2016cqy} for the gravitational degrees of freedom,  
encoded in $\q$, and the Fackerrel Ipser equation for the Maxwell degrees of freedom $\f$. \\
Besides the significant reduction of the linearized Einstein-Maxwell system 
 to scalar field equations, the  resulting system of equations allow us 
 to prove that, for generic perturbations,  $\q$ and $\f$ are pointwise bounded in the outer static region. 
This  gives a strong notion of linear stability in this region, 
analogous to that found in \cite{Dotti:2013uxa}  and \cite{Dotti:2016cqy}  for the Schwarzschild black hole. \\

If we assume that the large $t$ decay at fixed $r$ of solutions of  the Regge-Wheeler equation in the $Q=0$ case 
(known as Price tails,  
see \cite{Price:1971fb},  and \cite{Brady:1996za} for $\Lambda>0$)
occurs also for the $Q \neq 0$ Regge-Wheeler fields $\Phi_n$ in (\ref{RWE2}), and note   that 
equation (\ref{rw1h}) for $A_h^{(m)}$ formally agrees with the Regge-Wheeler equation (\ref{RWE})-(\ref{rwp}) for $n=1$ 
and $\ell=1$ (as suggested by (\ref{wfi})) and so also decays for large $t$, we conclude, 
using the 1-1 correspondence between 
odd perturbations and the set of $(\q,\f)$'s, together with equations (\ref{F}), (\ref{Q}), (\ref{wfi}) and (\ref{Pfi}), 
that at large $t$
\begin{equation} \label{Flt}
\mathcal{F} \simeq\frac{8Q^2}{r^5} \sqrt{ \frac{4 \pi}{3}}  \sum_{m=\pm1,0} j^{(m)} S_{(\ell=1,m)},
\end{equation}
and
\begin{equation} \label{Qlt}
\mathcal{Q} \simeq \frac{2(Q^2-Mr)(3Mr-2Q^2)}{r^9} \sqrt{ \frac{4 \pi}{3}}  \sum_{m=\pm1,0} j^{(m)} S_{(\ell=1,m)},
\end{equation}
which corresponds to a deformation within the Kerr-Newman (Kerr-Newman de Sitter) family by adding a small amount
of angular momentum. This is consistent with 
the picture that the perturbed black hole settles at large times into a slowly rotating charged black hole.\\

The divergence of $d \mathcal{F}/d\tau$ and $d \mathcal{Q}/d \tau$ 
for free falling radial observers as they approach the Cauchy horizon from region II, 
proved in the previous Section,  supports strong cosmic censorship 
in its purest form, as $\mathcal{Q}$ is a perturbed curvature scalar. This result, however,  has to be taken with caution,  
as the linear perturbation scheme becomes less reliable as linear fields 
 grow. 

\section{Acknowledgements}
This work was partially funded by grants PIP 11220080102479
(Conicet-Argentina) and 30720110101569CB (Universidad Nacional de C\'ordoba). 
J.M.F.T. is supported by a fellowship from Conicet.

\end{document}